\documentclass[a4paper,11pt]{article}
\usepackage{jheppub} 
\usepackage{lineno}
\usepackage{moresize}
\usepackage{wasysym}
\usepackage{mathtools}
\usepackage{twemojis}
\usepackage{amsthm}
\usepackage{tipa}

\input{macros.sty}
\input{draw.sty}
\newcommand{\invrho}[1]{\underline{#1}}

\title{\boldmath Torus algebra and logical operators at low energy}







\author{Ying Chan, Tian Lan, and Linqian Wu}\affiliation{The Chinese University of Hong Kong,\\
New Territories, Hong Kong, China}

\emailAdd{yingchan@cuhk.edu.hk}
\emailAdd{tlan@cuhk.edu.hk}
\emailAdd{wulinch@link.cuhk.edu.hk}

\abstract{Given a modular tensor category $\cC$, we construct an associative algebra $\torc$, which we call the torus algebra. We prove that the torus algebra is semisimple by explicitly constructing all the simple modules. Suppose that a topological ordered phase described by $\cC$ is put on a torus. Physically, each simple module over $\torc$ consists of the low energy states on the torus with one anyon excitation, or equivalently, the ground states on a punctured torus where the anyon is enclosed by the puncture. Elements in $\torc$ can be physically interpreted as anyon hopping processes on the torus. We give the precise formula how an arbitrary logical operator on the low energy states on a torus can be realized by moving anyons on the torus. Our work thus provides a theoretical proposal that the low energy states on a torus can serve as topological qudits and one can arbitrarily manipulate them by moving anyons around.}

\begin{document}
\maketitle
\flushbottom

\allowdisplaybreaks
\section{Introduction}
In \cite{Ma} Ma et al.~proposed an operator algebra approach to calculate the ground state degeneracy of the Ising cage-net model~\cite{Prem_2019}. In this paper, we restrict our interest to the operator algebra corresponding to 2+1D topological orders~\cite{Wen89,WN90,Wen90,Wen91,Wen9506066}, which is constructed based on the data of a modular tensor category (MTC) $\cC$ and physically can be thought of as the anyon hopping processes on the torus. 
We refer to such operator algebra as the \emph{torus algebra}, denoted by $\torc$. 

Ref.~\cite{Ma} conjectured a method to decompose the torus algebra, by quotienting out certain ``loop" elements. Unfortunately, this method does not work in general cases.
In this work we give a serious treatment of the torus algebra. The semisimplicity is rigorously proved by constructing all the simple modules. Moreover, we find that the simple modules have clear physical meanings: they are nothing but the low energy subspace on the torus with one anyon excitation. More precisely, given a punctured torus where anyon $q$ is enclosed in the puncture, the corresponding low energy subspace, denoted by $M_q$, forms a simple module over $\torc$. The algebra action can be intuitively depicted by the following topological quantum field theory (TQFT) picture in Figure \ref{ptorus}. This picture agrees with the universal skein theory stated in \cite{bartlett2015modular}: the extended 3d TQFT associated with MTC $\cC$  assigns the punctured torus to the space of ``internal string diagrams’’ inside the torus, which can be identified with the simple modules $M_q$.
\begin{figure}
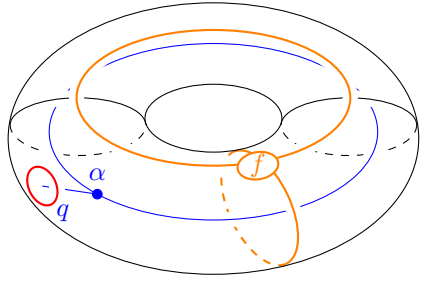

    \label{ptorus}
    \centering
    \ptorus
    \caption{TQFT picture of punctured torus. The orange part denotes the elements of $\torc$. The blue part denotes modules over $\torc$. And the red circle is the puncture.}
    \label{fig:enter-label}
\end{figure}

The main result of this work is the isomorphism $\torc\cong \oplus_q \Endo(M_q)$ (Theorem~\ref{thm.main}). Based on this result, we know that any logical operator in $\Endo(M_q)$ can be realized by elements in $\torc$, i.e., certain patterns of anyon hopping on the torus. Therefore, our work paves the way for exploiting the low energy states on the torus as topological qudits, by manipulating them via moving anyons around the torus.

The paper is organized as follows. In Section \ref{sec2}, we introduce the necessary preliminaries and fix our notations. In Section \ref{sec3}, we give the precise definition of torus algebra. In Section \ref{sec4}, we discuss special elements in the torus algebra, dubbed the centrifugal loops, which lead to a set of central orthogonal idempotents. In Section \ref{sec5}, we construct the simple modules and prove the main theorem. In Section \ref{sec6}, we discuss the extra properties of $\torc$ when $\cC$ is a unitary MTC. In Section \ref{sec7}, we discuss the modular transformation on the punctured torus as an application of our main result. In Section \ref{sec8}, we give some simple but nontrivial examples.






\section{Preliminaries}\label{sec2}
We assume that the reader has necessary knowledge on tensor categories and graphical calculus (see e.g., \cite{etingof2016tensor,Kitaev_2006,Bojko}). In this section we fix our notations and list important results to be used later.

Given a MTC $\cC$, we use $\Irr(\cC)$ to denote the set of (isomorphism classes) of simple objects in $\cC$. For $i,j,k \in \Irr(\cC)$, we depict the basis vectors in $\Hom(i\ot j,k)$ and $\Hom(k,i\ot j)$ by $$\Rmovebb, \quad \Rmoveaa $$ respectively. The morphisms in our graphs are to be composed from bottom to top.

 The dimension of $\Hom(i\ot j,k)$ is denoted by $N_{ij}^k$, called the fusion rules.
The dual object of $i$ is denoted by $\bar i$ and depicted as $$\dirline = \dirlineb.$$ We will add arrow to the lines only when necessary; when there is no arrow it is understood as a line with an upwards arrow.
The quantum dimension of $i$ is denoted by $d_i$, graphically $$d_i= \qdim.$$ We take the normalization such that the evaluation is 
\begin{equation}\label{normalise}
    \compupdown = 
    \sqrt{\frac{d_i d_j}{d_k}}\delta_{\textcolor{purple}{\alpha} \textcolor{brown}{\beta}}\delta_{kk'}\lineq.
\end{equation}

We adopt the convention of F-move as follows.
\begin{equation}
    \Fmovec = \sum\limits_{r,\textcolor{red}{\bullet},\textcolor{blue}{\bullet}}\left[F^{ijk}_{l}\right]_{s\textcolor{orange}{\bullet}\textcolor{green}{\bullet},r\textcolor{red}{\bullet}\textcolor{blue}{\bullet}} \Fmovea.
\end{equation}
The colored nodes will not be labeled by letters unless necessary.
The braiding is depicted by $c_{ij}=\cbraid$ and the convention for $R$-move is 
\begin{equation}
    \Rmoved= \sum_{\textcolor{blue}{\bullet}}R^{ij}_{k\textcolor{blue}{\bullet}\textcolor{red}{\bullet}} \Rmoveaa.
\end{equation}

The $S$-matrix is given by
\begin{equation}
    S_{ab} = \frac{1}{D}\Sfig,
\end{equation}
where $D=\sqrt{\sum_a d_a^2}$ is the total quantum dimension. For a MTC, the $S$ matrix is unitary and one has the Verlinde formula:
\begin{equation}
    N_{ab}^{c} = \sum_x \frac{S_{ax} S_{bx} \overline{S}_{cx}}{S_{1x}},
\end{equation}
 where $\overline S_{cx}$ denotes the complex conjugate.
 
Also, a twist is
\begin{equation}
    \theta_x \linexid = \twist,
\end{equation}
where $\theta_x$ is a phase factor. The multiplicative chiral central charge is defined as
\begin{equation}
    \Theta = \sum_a \frac{d_a^2}{D}\theta_a = e^{2\pi i\frac{c}{8}}, 
\end{equation}
where $c$ is the additive chiral central charge.

The following results will be used extensively in this paper:

\begin{lem}\label{FRmove}
\begin{enumerate}
    \item 
\begin{equation}
    \circline = \frac{S_{ax}}{S_{1x}}\linexid .
\end{equation}
\item One can apply an F-move and the corresponding inverse F-move together on two matching patches of the graph, 
    \begin{equation}\label{Fmove}
\sum\limits_{r,\textcolor{red}{\bullet},\textcolor{blue}{\bullet}}
   \Fmovea \otimes_\mathbb{C} \Fmoveb = \sum\limits_{s,\textcolor{orange}{\bullet},\textcolor{green}{\bullet}}\Fmovec\otimes_\mathbb{C}\Fmoved.
\end{equation}
\item Likewise for R-moves,
    \begin{equation}\label{Rmove}
    \Rmovea\otimes_\mathbb{C} \Rmoveb = \Rmovee = \Rmoved\otimes_\mathbb{C} \Rmovec.
\end{equation}
In the above graph, we follow the nice Einstein-like convention of \cite{henriques2016bicommutant,green2023enriched}, that the pair of nodes with the same color is automatically summed over and the summation symbol is omitted. From now on, we will keep using this convention. 
\item In the Einstein-like convention, we have 
\begin{equation}
    \splitt = \sum_k\sqrt{\frac{d_k}{d_id_j}} \splittb.
\end{equation}
\end{enumerate}

\end{lem}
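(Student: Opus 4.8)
The plan is to reduce all four identities to three standard ingredients of the graphical calculus of a modular tensor category: Schur's lemma for simple objects, the resolution of the identity on a tensor product of simples coming from semisimplicity together with the normalization \eqref{normalise}, and the invertibility of the $F$- and $R$-matrices.

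I would prove Part~4 first, since Parts~2 and~3 rest on it. For a fixed simple $k$ and a fusion vertex $\alpha\in\Hom(i\ot j,k)$, let $e_{k\alpha}\in\Endo(i\ot j)$ be the composite $i\ot j\to k\to i\ot j$ built from $\alpha$ followed by its adjoint splitting vertex, so that the right-hand side of Part~4 is $\sum_{k,\alpha}\sqrt{d_k/(d_id_j)}\,e_{k\alpha}$. Stacking two of these and evaluating the resulting bubble with \eqref{normalise} gives $e_{k\alpha}\circ e_{k'\beta}=\sqrt{d_id_j/d_k}\,\delta_{kk'}\delta_{\alpha\beta}\,e_{k\alpha}$, so the rescaled maps $p_{k\alpha}=\sqrt{d_k/(d_id_j)}\,e_{k\alpha}$ are mutually orthogonal idempotents. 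Since semisimplicity gives $\Hom(i\ot j,i\ot j)=\bigoplus_k\Hom(i\ot j,k)\ot\Hom(k,i\ot j)$, the family $\{p_{k\alpha}\}$ is complete, hence $\sum_{k,\alpha}p_{k\alpha}=\id_{i\ot j}$, which is exactly Part~4 once the $\alpha$-sum is absorbed into the Einstein-like convention.

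Part~1 I would obtain by Schur's lemma together with a categorical trace. Its left side is an endomorphism of the simple object $x$, hence equals $\lambda\,\id_x$ for a scalar $\lambda$. Closing the $x$-strand into a loop turns the right side into $\lambda d_x$ and the left side into the Hopf link of $a$ and $x$, which equals $D\,S_{ax}$ by the definition of the $S$-matrix; since the Hopf link of the unit with $x$ is the unknot of value $d_x$, one has $S_{1x}=d_x/D$, so $\lambda=D\,S_{ax}/d_x=S_{ax}/S_{1x}$. The only point to watch is that this uses $\cC$ ribbon (hence spherical), so that the trace is well defined and cyclic.

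Parts~2 and~3 both express that a local ``cut'' in a diagram, resolved through one fusion channel, equals the same cut resolved through another. For Part~2 the cleanest route is to note that both sides of \eqref{Fmove} compute the canonical projector of $i\ot j\ot k$ onto its $l$-isotypic component --- the left side in the $r$-channel tree basis of $\Hom(l,i\ot j\ot k)$, the right side in the $s$-channel basis --- so they are basis independent and hence equal; concretely one substitutes the defining relation of the $F$-move, uses that the co-trees transform by the inverse transposed $F$-matrix, and collapses one summation with $F^{-1}F=\id$, the equal self-normalizations $\sqrt{d_id_jd_k/d_l}$ of the $r$- and $s$-channel trees (again from \eqref{normalise}) being what makes the scalar factors match. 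Part~3 is the same argument with the braiding in place of the $F$-move: inserting the $R$-move definition into each factor of the two outer members of \eqref{Rmove} and using that the $R$-matrices appearing in the splitting factor and the fusion factor are mutually inverse (the two crossings being opposite), the sums over the intermediate vertices collapse and leave the unbraided resolution of the identity, the middle member of \eqref{Rmove}. The main obstacle is not conceptual but bookkeeping: one must keep \eqref{normalise} and the chosen $F$- and $R$-conventions mutually consistent so that every scalar prefactor --- and, in Part~1, the orientation of the trace --- comes out right, and in particular remember that the co-trees transform under $(F^{-1})^{T}$ rather than under $\overline{F}$, the two agreeing only when $\cC$ is unitary, which is not assumed at this point of the paper.
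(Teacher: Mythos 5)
Your proposal is correct, but there is nothing in the paper to compare it against: Lemma \ref{FRmove} is stated in the Preliminaries as a collection of standard facts of the graphical calculus, quoted without proof (the references given at the start of Section \ref{sec2} are the intended source). Your arguments are the standard ones and would serve as a complete proof: Part 1 by Schur's lemma plus a quantum trace, using that the closed diagram is the Hopf link $D\,S_{ax}$ and that $S_{\one x}=d_x/D$; Part 4 as the resolution of the identity on $i\ot j$ furnished by semisimplicity together with the bubble normalization \Eq{normalise}; Parts 2 and 3 as the basis-independence of the canonical element ``basis $\otimes_{\C}$ dual basis'', with the co-trees transforming by the inverse transpose of the $F$-matrix (respectively by the inverse $R$-matrix for the opposite crossing), the common self-pairing factor $\sqrt{d_id_jd_k/d_l}$ being identical in the $r$- and $s$-channel bases so that no extra scalar appears. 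Two points worth tightening if you write this out: in Part 4, ``adjoint splitting vertex'' should be ``dual-basis splitting vertex with respect to the pairing \Eq{normalise}'', since unitarity is not assumed until Section \ref{sec6} (you make the analogous caveat yourself for $F$); and the completeness step is cleanest phrased as: $\id_{i\ot j}-\sum_{k,\alpha}p_{k\alpha}$ annihilates every morphism from a simple object into $i\ot j$, hence vanishes by semisimplicity.
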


\section{Torus algebra} \label{sec3}
In this section, we will construct the torus algebra, motivated by stacking anyon hopping operators on a torus.
We cut the torus along two non-contractible loops, and study the resulting graph on a square:
\begin{equation*}
    \mtorus \rightarrow \anytraj\quad.
\end{equation*}
This allows us to intuitively identify the basis vector $f$ as \fsbasis. We then define the stacking of anyon hopping operators as the multiplication,
\begin{equation*}
    \mtorusstack \rightarrow \anytrajstack.
\end{equation*}
We now give the rigorous definition:
\begin{definition}
    Given a MTC $\cC$, the torus algebra $\torc$ is the following vector space 
    \[ \torc:= \bigoplus_{i,j\in \Irr{\cC}} \Hom(i\ot j,j\ot i),\]
    equipped with the multiplication map
    \begin{align}
        \star:\torc\otr[\C] \torc &\to \torc \nonumber\\
        g \otr[\C] f &\mapsto g\star f=\sum_{m,n}\sqrt{\frac{d_m d_n}{d_dd_bd_ad_c}}\tordef,
    \end{align}
    and unit $1=\id_\one$.
    \end{definition}

    \begin{remark}
    We give a graphical proof that the multiplication in the above definition is associative, i.e. ,
    \begin{equation}
        \left(f\star g \right)\star h = f\star (g\star h).
    \end{equation}
    \end{remark}
    \begin{proof} 
        This proof makes use of Equation (\ref{Fmove}) from Lemma \ref{FRmove}  twice, in the $\triangle$-equality sign respectively, i.e. ,
        \begin{align}
        (f\star g)\star h &=
        \sum_{m,n,r,s}\sqrt{\frac{d_r d_s}{d_ad_bd_cd_d d_ed_f}}\assoproof{-1}
        \nonumber\\
        &\overset{\triangle}{=} \sum_{m,q,r,s} \sqrt{\frac{d_r d_s}{d_ad_bd_cd_d d_ed_f}}\assoproof{0} \nonumber\\
        &\overset{\triangle}{=} \sum_{p,q,r,s}\sqrt{\frac{d_r d_s}{d_ad_bd_cd_d d_ed_f}}\assoproof{2} = f\star (g\star h).
    \end{align}
    \end{proof}

\begin{remark}
    One should not confuse the torus algebra with the tube algebra (see e.g.,~\cite{ocneanu1994chirality,hardiman2020graphical,LW1311.1784}. They have different background manifolds and the multiplications are given by stacking versus gluing.
\end{remark}

\section{Central idempotent decomposition} \label{sec4}
In this section, we will introduce the central elements in $\torc$ and the construction of central idempotents to decompose the torus algebra systematically. 

Firstly, consider the following graph
\begin{equation*}
    \centralmo\ .
\end{equation*} 
Since the multiplication is stacking graphs, and using the naturality of braiding, the purple loop can be made not overlap with a general element $f$, we know the result is the same no matter the purple loop is above or below $f$. In other words, the purple loop gives rise to a central element. More rigorously, we introduce 
\begin{definition}[Centrifugal loop]
    We define the \emph{centrifugal loop} elements $\mwasylozenge_a$  in $\torc$, for each $a\in \Irr(\cC)$, to be the elements of the following special form
\begin{equation}
    \lwasylozenge_a \coloneqq \sum_{k,y}\frac{\sqrt{d_k d_y}}{d_a^2}\loopc=\squareloopa \in \torc
\end{equation}
    and represent them graphically as loops surrounding the four corners of the square. Gluing parallel ends of the square makes them contractible loops on the torus; however, such deformation is forbidden in $\torc$ and they are nontrivial elements.
\end{definition}

\begin{prop}\label{prop.cen}
    Centrifugal loops $\mwasylozenge_a$ of any $a\in\Irr(\cC)$ are in the center of $\torc$, i.e. , 
    \begin{equation}
        \lwasylozenge_a \star g = g\star \lwasylozenge_a \mbox{       }\forall g\in \torc.
    \end{equation}
\end{prop}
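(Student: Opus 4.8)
The plan is to prove the identity graphically, making precise the heuristic already sketched above for the picture with the purple loop: once the definitions of $\star$ and of $\mwasylozenge_a$ are unwound, the diagram representing $\mwasylozenge_a\star g$ contains a null-homotopic $a$-labelled circle, and the naturality of the braiding lets us isotope this circle so that its projection onto the square no longer meets the diagram of $g$ nor the cabling produced by the $\star$-product. Once the two pieces have disjoint projections, stacking them in either order yields the same element of $\torc$, and that is exactly the assertion $\mwasylozenge_a\star g=g\star\mwasylozenge_a$. By bilinearity of $\star$ it suffices to take $g$ to be a single basis vector of $\Hom(i\ot j,j\ot i)$ for fixed $i,j\in\Irr(\cC)$. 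Substituting the definitions, $\mwasylozenge_a\star g$ becomes a sum over the through-labels $k,y$ of $\mwasylozenge_a$ and over the fusion channels $m,n$ created by $\star$, of the square diagram in which the matching strands of $\mwasylozenge_a$ and $g$ are fused through vertices labelled $m$ and $n$, the body of $\mwasylozenge_a$ is just the canonical crossing of the two torus cycles — which, by Lemma~\ref{FRmove}, fuses transparently into the strands of $g$ — and, crucially, four small $a$-arcs hug the four corners of the square while $g$ occupies the bulk.

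The core step is to slide this loop off. At each of the four corners the $a$-strand crosses a strand of $g$ or of the $\star$-cabling; applying naturality of the braiding $c_{-,-}$ at each such crossing commutes the $a$-strand past it, and after finitely many such moves the four $a$-arcs — which are identified in pairs across the glued edges of the square — close up into a single contractible $a$-circle contained in a disk disjoint from the rest of the diagram. A contractible circle in such a disk may be pushed freely from the lower sheet of the stacking to the upper sheet; in the TQFT language of Figure~\ref{ptorus}, $\mwasylozenge_a$ is a null-homotopic $a$-circle inside $T^2\times I$ that can be isotoped to either boundary torus without meeting the link representing $g$. Hence $\mwasylozenge_a\star g$ equals the same diagram with the $a$-loop drawn above $g$, and reading that diagram back through the definition of $\star$ gives $g\star\mwasylozenge_a$.

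It remains to check the bookkeeping: the scalar prefactors $\tfrac{\sqrt{d_kd_y}}{d_a^2}$ attached to $\mwasylozenge_a$ and the $\sqrt{d_md_n/(d_ad_bd_cd_d)}$ produced by $\star$ are untouched by the moves above, since naturality squares create no bubbles or twists, so no spurious coefficient appears. The delicate point — the expected main obstacle — is precisely this core step: turning ``the loop can be made not to overlap with $g$'' into an explicit finite sequence of naturality-of-braiding moves while keeping careful track of the edge identifications of the square as the isotopy proceeds. It is also worth emphasising that $\mwasylozenge_a$ must surround \emph{all four} corners, i.e.\ that the associated circle is null-homotopic on $T^2$: a loop running along a non-contractible cycle would link the torus strands and would \emph{not} be central, so the argument genuinely relies on the contractibility built into the definition of the centrifugal loop.
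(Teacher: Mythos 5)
Your overall strategy --- slide the $a$-loop past $g$ using braiding identities --- is the same idea that motivates the paper's proof, but the execution has a genuine gap at exactly the step you yourself flag as delicate, and the justification you offer for that step is not available in $\torc$ as defined. You argue that after finitely many naturality moves the four $a$-arcs ``close up into a single contractible $a$-circle contained in a disk disjoint from the rest of the diagram,'' invoking the identification of the square's edges and free isotopy inside $T^2\times I$. But $\torc$ is \emph{not} the skein algebra of the glued thickened torus: the definition of the centrifugal loop explicitly states that the deformation which closes the corner arcs up by gluing parallel edges is forbidden, and the multiplication is an algebraic convolution in which boundary strands are fused through summed channels $m,n$ with quantum-dimension coefficients, not a literal gluing. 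If your intermediate claim were achievable by moves valid in $\torc$, the resulting disjoint contractible $a$-circle would evaluate to $d_a$ times the remaining diagram, giving $\lwasylozenge_a\star g=d_a\,g$ for all $g$, i.e.\ $\lwasylozenge_a=d_a\cdot 1$. This is false --- e.g.\ in the Ising example $\lwasylozenge_\sigma=\sqrt2\,r\neq\sqrt2\cdot 1$, and the whole point of Section~4 is that the $\lwasylozenge_a$ are nontrivial elements spanning a center of dimension $|\Irr(\cC)|$. So the topological picture proves too much and cannot be used as stated.

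The sound version of the slide is what the paper does: expand $\lwasylozenge_a\star g$ in the defining basis, apply an F-move pair (item 2 of Lemma~\ref{FRmove}) to bring the fusion vertices created by $\star$ into position, then use the matched braiding--antibraiding swap through the summed intermediate channel (item 3, the R-move identity) to reverse the over/under relation between the loop square's strands and $g$'s strands, and apply the inverse F-move to recognize the result as $g\star\lwasylozenge_a$. Note that naturality of the braiding alone does not suffice: the actual difference between $\lwasylozenge_a\star g$ and $g\star\lwasylozenge_a$ is which square's strands cross over which near the fusion vertices, and flipping an individual crossing changes the diagram by $R$-symbols; equality holds only after summing over the fusion channels, which is precisely what the matched-pair identity \eqref{Rmove} encodes. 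Your proposal asserts the core step rather than performing it, and the route you indicate for performing it (edge gluing plus free isotopy) is exactly the forbidden move, so as written the proof does not go through.
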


\begin{proof}
    To show centrifugal loop of any object $a\in\Irr(\cC)$ commutes with any elements  $f\in\torc$ requires replacing braiding-antibraiding pairs with antibraiding-braiding pairs, which is illustrated in Equation (\ref{Rmove}) from Lemma \ref{FRmove}. Implementing Lemma \ref{FRmove} of R-move in $\blacktriangle$-equality sign shows the desired commutativity, i.e. ,
    \begin{align}
    \lwasylozenge_a \star g &= \sum_{k,y,m,n,c,d}\sqrt{\frac{d_md_n}{d_kd_dd_yd_c}}\centerloop{-1} \overset{\triangle}{=} \sum_{r,s,m,n,c,d}\sqrt{\frac{d_md_n}{d_kd_dd_yd_c}}\centerloop{0}\nonumber \\
    &\overset{\blacktriangle}{=} \sum_{r,s,m,n,c,d}\sqrt{\frac{d_md_n}{d_kd_dd_yd_c}}\centerloop{1} \overset{\triangle}{=} \sum_{p,q,m,n,c,d}\sqrt{\frac{d_md_n}{d_kd_dd_yd_c}}\centerloop{2} = g\star  \lwasylozenge_a.
\end{align}
\end{proof}

\begin{prop}
    The muliplication of centrifugal loop elements obey the fusion rule of the MTC, i.e.,
    \begin{equation}
        \lwasylozenge_a \star \lwasylozenge_b =  \sum_c N_{ab}^c\mbox{  }\lwasylozenge_c.
    \end{equation}
\end{prop}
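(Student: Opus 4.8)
The plan is to prove this by a direct graphical computation in $\torc$, in the same spirit as the proofs of associativity and of Proposition \ref{prop.cen}. First I would expand $\mwasylozenge_a \star \mwasylozenge_b$ using the definition of the centrifugal loop together with the multiplication $\star$: stacking the two square diagrams superposes an $a$-labelled loop and a $b$-labelled loop, both hugging the four corners, alongside the edge lines carrying the summation labels. Because stacking forces the edge legs of the two factors to match, the product is automatically a sum over single pairs $(m,n)$ of (loop diagram) $+$ (edges $m,n$), so it already has the shape a centrifugal loop must have; everything then reduces to identifying the loop decoration and the overall scalar.

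The key geometric steps are three. (i) Using the naturality of braiding — the R-move relation of Lemma \ref{FRmove}(3), exactly as in the proof of Proposition \ref{prop.cen} — slide the $a$-winding past the $b$-winding so that the two loops become unbraided and nested, i.e. genuinely parallel cycles. (ii) Fuse the two parallel loop strands: insert a complete set of splitting/fusion basis morphisms for the decomposition $a\ot b\cong\bigoplus_c N_{ab}^c\, c$ — equivalently apply Lemma \ref{FRmove}(4) — so that the doubled loop becomes $\sum_c\sum_{\mu=1}^{N_{ab}^c}$ of a single $c$-labelled loop carrying a small $a,b$-bubble. (iii) Collapse that bubble with the evaluation normalization \eqref{normalise}, which turns it into a quantum-dimension scalar and leaves a clean single $c$-loop in precisely the shape appearing in the definition of $\mwasylozenge_c$.

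The remaining work is bookkeeping: one collects the scalar from the definition of $\mwasylozenge_a$, the one from $\mwasylozenge_b$, the prefactor built into $\star$, and the bubble factor produced by \eqref{normalise}, and checks that their product is exactly $\sqrt{d_m d_n}/d_c^2$, the weight appearing in $\mwasylozenge_c$; since each of the $N_{ab}^c$ basis vectors $\mu$ yields an identical $c$-loop term, summing over $\mu$ gives $N_{ab}^c\,\mwasylozenge_c$, and summing over $c$ finishes. I expect \textbf{this coefficient matching to be the real obstacle}: the geometric moves (unbraiding, fusing, collapsing the bubble) are routine once Lemma \ref{FRmove} and \eqref{normalise} are available, but one must track carefully how the $d_a^{-2}$, $d_b^{-2}$ weights, the $\sqrt{d_k d_y}$ weights internal to the two centrifugal loops, the $\sqrt{d_m d_n/(d_\cdot d_\cdot d_\cdot d_\cdot)}$ weight of $\star$, and the $\sqrt{d_a d_b/d_c}$ from the bubble conspire to reproduce precisely the $\mwasylozenge_c$ normalization. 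As a consistency check and as motivation, note that combining the claimed identity with the Verlinde formula quoted above shows it is equivalent to the scalar relation $(S_{aq}/S_{1q})(S_{bq}/S_{1q})=\sum_c N_{ab}^c\, S_{cq}/S_{1q}$ for every $q$ — which is exactly the eigenvalue statement one anticipates once the simple modules are constructed; this also explains why the $1/d_a^2$ normalization in $\mwasylozenge_a$ is the correct one.
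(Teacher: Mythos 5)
Your proposal is correct and takes essentially the same route as the paper: the paper's proof is precisely the one-line graphical computation in the square picture, where stacking the two centrifugal loops produces two concentric $a$- and $b$-loops around the corners, which then fuse into $\sum_c N_{ab}^c$ copies of the $c$-loop. Your more explicit plan (unbraiding by naturality, inserting the splitting/fusion basis of Lemma \ref{FRmove}, collapsing the bubble via the normalization \eqref{normalise}, and matching coefficients) is the same argument with the bookkeeping spelled out, and the weights do conspire as you anticipate, since the $\sqrt{d_c/(d_a d_b)}$ from the basis insertion cancels against the $\sqrt{d_a d_b/d_c}$ from the bubble.
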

\begin{proof}
    \begin{equation}
    \lwasylozenge_a \star \lwasylozenge_b = \squarestack = \sum_c N_{ab}^c\mbox{  }\squareloopc = \sum_c N_{ab}^c\mbox{  }\lwasylozenge_c
\end{equation}
\end{proof}

By the Verlinde formula, the fusion rule can be diagonalized by the $S$-matrix. Thus, for simple $q\in\Irr(\cC)$, we denote \begin{equation}\label{idempotent}
    P_q^{\wasylozenge} := \sum_a S_{q\mathbf{1}}\overline{S_{aq}}\lwasylozenge_a.
\end{equation} Then,

\begin{prop}
$P_q^{\wasylozenge}$ are mutually orthogonal idempotents, i.e.,
    \begin{equation}
        P_s^{\wasylozenge} \star P_t^{\wasylozenge} = \delta_{st} P_s^{\wasylozenge}.
    \end{equation}
    They are also central and sum to the unit of the algebra.
\end{prop}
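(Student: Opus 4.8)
The plan is to reduce all three claims to two facts already established about the centrifugal loops: they lie in the center of $\torc$ (Proposition~\ref{prop.cen}), and they multiply according to the fusion rule, $\mwasylozenge_a\star\mwasylozenge_b=\sum_c N_{ab}^c\,\mwasylozenge_c$ (the preceding proposition), together with the Verlinde formula and the unitarity of $S$. Centrality of $P_q^{\wasylozenge}$ is then immediate: by its defining formula each $P_q^{\wasylozenge}$ is a $\C$-linear combination of the $\mwasylozenge_a$, and the center of an algebra is a linear subspace, so $P_q^{\wasylozenge}\star g=g\star P_q^{\wasylozenge}$ for every $g\in\torc$.

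For the orthogonality of idempotents I would compute directly, inserting the Verlinde formula $N_{ab}^c=\sum_x S_{ax}S_{bx}\overline{S_{cx}}/S_{\mathbf 1 x}$ and regrouping the sums over $a$ and $b$:
\begin{align*}
P_s^{\wasylozenge}\star P_t^{\wasylozenge}
&=\sum_{a,b,c}S_{s\mathbf 1}\,\overline{S_{as}}\;S_{t\mathbf 1}\,\overline{S_{bt}}\;N_{ab}^c\;\mwasylozenge_c\\
&=\sum_{c,x}\frac{S_{s\mathbf 1}\,S_{t\mathbf 1}}{S_{\mathbf 1 x}}\Big(\textstyle\sum_a\overline{S_{as}}S_{ax}\Big)\Big(\textstyle\sum_b\overline{S_{bt}}S_{bx}\Big)\overline{S_{cx}}\;\mwasylozenge_c .
\end{align*}
The two inner sums are the matrix entries $(S^{\dagger}S)_{sx}=\delta_{sx}$ and $(S^{\dagger}S)_{tx}=\delta_{tx}$, so only $x=s=t$ survives; using $S_{s\mathbf 1}=S_{\mathbf 1 s}=d_s/D$ the whole expression collapses to $\delta_{st}\sum_c S_{s\mathbf 1}\overline{S_{cs}}\,\mwasylozenge_c=\delta_{st}P_s^{\wasylozenge}$, which is the claim.

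For the last assertion I would interchange the order of summation and use unitarity of $S$ once more: $\sum_q P_q^{\wasylozenge}=\sum_a\big(\sum_q S_{q\mathbf 1}\overline{S_{aq}}\big)\mwasylozenge_a=\sum_a\big(\sum_q S_{\mathbf 1 q}\overline{S_{aq}}\big)\mwasylozenge_a=\sum_a (SS^{\dagger})_{\mathbf 1 a}\,\mwasylozenge_a=\mwasylozenge_{\mathbf 1}$. It then remains to identify $\mwasylozenge_{\mathbf 1}$ with the unit $1=\id_{\mathbf 1}$ of $\torc$: in the square (torus) picture of a centrifugal loop the loop is coloured by the trivial object, hence removable by the basic rules of graphical calculus, leaving the empty diagram on the square, which is by definition the algebra unit.

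The computation is entirely routine; the step that needs the most care is the bookkeeping of the $S$-matrix conventions fixed in Section~\ref{sec2} (the symmetry $S_{ab}=S_{ba}$, the unitarity $SS^{\dagger}=S^{\dagger}S=\mathbf 1$, and $S_{\mathbf 1 a}=d_a/D$ being real and positive), plus the elementary graphical identification $\mwasylozenge_{\mathbf 1}=1$. Equivalently, and perhaps more transparently, one may observe that $a\mapsto\mwasylozenge_a$ is a unital algebra homomorphism from the Verlinde fusion ring of $\cC$ into the center of $\torc$, sending the standard primitive idempotents $S_{\mathbf 1 q}\sum_a\overline{S_{aq}}\,[a]$ of the fusion ring to the $P_q^{\wasylozenge}$; the three statements then follow at once from the well-known diagonalization of the fusion ring by the $S$-matrix.
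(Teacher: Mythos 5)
Your proof is correct and follows essentially the same route as the paper: expand $P_s^{\wasylozenge}\star P_t^{\wasylozenge}$ via the fusion rule of the centrifugal loops, insert the Verlinde formula and use unitarity of $S$ for orthogonality, inherit centrality from Proposition~\ref{prop.cen} by linearity, and sum over $q$ with unitarity to get $\mwasylozenge_{\mathbf 1}=1$. Your extra details (the explicit $(S^{\dagger}S)$ bookkeeping and the graphical removal of the trivial loop to identify $\mwasylozenge_{\mathbf 1}$ with the unit) only spell out steps the paper leaves implicit.
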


\begin{proof}
\begin{align}
    P_s^{\wasylozenge} \star P_t^{\wasylozenge} &= \sum_{ab} S_{s\mathbf{1}}\overline{S_{as}}S_{t\mathbf{1}}\overline{S_{bt}}\lwasylozenge_a \star \lwasylozenge_b = \sum_{abc} S_{s\mathbf{1}}\overline{S_{as}}S_{t\mathbf{1}}\overline{S_{bt}} N_{ab}^{c} \lwasylozenge_c \nonumber \\
    &= \sum_{abcl}S_{s\mathbf{1}}\overline{S_{as}}S_{t\mathbf{1}}\overline{S_{bt}}\frac{S_{al} S_{bl} \overline{S_{cl}}}{S_{\mathbf{1}l}}\lwasylozenge_c = \delta_{st} \sum_c S_{s\mathbf{1}}\overline{S_{cs}}\lwasylozenge_c = \delta_{st} P_s^{\wasylozenge}
\end{align}
That $P_q^{\wasylozenge}$ is central is due to Proposition \ref{prop.cen}. By direct calculation,
\begin{equation}
    \sum_s P_s^{\wasylozenge} = \sum_{sa} S_{s\mathbf{1}}\overline{S_{as}}\lwasylozenge_a = \lwasylozenge_\mathbf{1} = 1
\end{equation}
\end{proof}

\begin{corollary}
$\torc$ admits the direct sum decomposition $\torc=\bigoplus_q P_q^{\wasylozenge} \torc$ as algebras.
\end{corollary}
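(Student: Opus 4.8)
The plan is to deduce this from the preceding proposition by the standard argument that a finite family of central, mutually orthogonal idempotents summing to the unit splits the algebra into a direct sum of two-sided ideals, each of which is a unital algebra in its own right. The genuinely nontrivial input — constructing the $P_q^{\wasylozenge}$ and verifying $P_s^{\wasylozenge}\star P_t^{\wasylozenge}=\delta_{st}P_s^{\wasylozenge}$, centrality, and $\sum_q P_q^{\wasylozenge}=1$ — is already in place, so what remains is bookkeeping.

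First I would check that each $P_q^{\wasylozenge}\torc$ is a two-sided ideal: for $g\in\torc$ and $x=P_q^{\wasylozenge}\star y$, centrality gives $g\star x=g\star P_q^{\wasylozenge}\star y=P_q^{\wasylozenge}\star(g\star y)\in P_q^{\wasylozenge}\torc$, and likewise $x\star g\in P_q^{\wasylozenge}\torc$. Next, the ideals span: any $x\in\torc$ satisfies $x=1\star x=\sum_q P_q^{\wasylozenge}\star x$, so $\torc=\sum_q P_q^{\wasylozenge}\torc$. They are independent: if $x\in P_q^{\wasylozenge}\torc$ then $P_q^{\wasylozenge}\star x=x$, while if also $x\in\sum_{q'\ne q}P_{q'}^{\wasylozenge}\torc$ then $P_q^{\wasylozenge}\star x=0$ by orthogonality, forcing $x=0$; hence $\torc=\bigoplus_q P_q^{\wasylozenge}\torc$ as vector spaces.

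Finally I would upgrade this to a decomposition as algebras. Each summand $P_q^{\wasylozenge}\torc$ is closed under $\star$ (being an ideal) and has $P_q^{\wasylozenge}$ as a two-sided unit, since $P_q^{\wasylozenge}\star(P_q^{\wasylozenge}\star y)=P_q^{\wasylozenge}\star y$. For $q\ne q'$, $x\in P_q^{\wasylozenge}\torc$ and $y\in P_{q'}^{\wasylozenge}\torc$ give $x\star y=(P_q^{\wasylozenge}\star x)\star(P_{q'}^{\wasylozenge}\star y)=P_q^{\wasylozenge}\star P_{q'}^{\wasylozenge}\star(x\star y)=0$ using centrality, so distinct blocks annihilate each other and the multiplication on $\torc$ is the ``block-diagonal'' one, i.e. $\torc\cong\bigoplus_q P_q^{\wasylozenge}\torc$ as algebras. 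There is essentially no obstacle in this step; the only points that need a line of care are the vanishing of the cross terms (which is where centrality, rather than just orthogonality, is used) and the remark that finiteness of $\Irr(\cC)$ makes the direct sum and the direct product coincide.
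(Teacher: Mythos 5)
Your argument is correct and is exactly the standard central-orthogonal-idempotent decomposition that the paper implicitly relies on (it states the corollary without proof as an immediate consequence of the preceding proposition). Nothing is missing; the points you flag — vanishing of cross terms via centrality and $P_q^{\wasylozenge}$ serving as the unit of its block — are precisely the bookkeeping the paper leaves to the reader.
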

We will further prove that each direct summand is simple (a matrix algebra) by explicitly constructing the corresponding simple modules.

\section{Simple modules and punctured torus} \label{sec5}

\begin{definition}
    For any simple object $q \in \Irr(\cC)$, we define the left $\torc$-module, denoted by $M_q$,  as the following vector space 
    \begin{equation}
        M_q \coloneqq \bigoplus_{k \in \Irr(\cC)} \Hom\left(k,k\otimes q\right)
    \end{equation}
    equipped with the action of $\torc$,
    \begin{align}
        \rho_q: \torc &\to \Endo\left(M_q\right)\\
        \vartriangleright : \torc\otr[\C] M_q &\to M_q \nonumber\\
        f\otr[\C] \alpha &\mapsto f \vartriangleright \alpha = \sum_{n}\sqrt{\frac{d_n}{d_a d_k}}\action
    \end{align}
    Alternatively, the action can be expressed as the algebra homomorphism $\rho_q:\torc\to \Endo(M_q)$ by $\rho_q(f)\alpha=f\vartriangleright \alpha$.
\end{definition}
\begin{remark} We give a graphical proof that the action in the above definition is associative.
    \begin{equation}
        \left(f\star g\right) \vartriangleright \alpha = f \vartriangleright \left( g \vartriangleright \alpha \right)
    \end{equation}
\end{remark}
    
\begin{proof} 
    The proof uses the F-move convention to allow the direct fusion of $g$ to $\alpha$. 
    \begin{align}
        \left(f\star g\right) \vartriangleright \alpha &= \sum_{n,m,r}\sqrt{\frac{d_m d_r}{d_d d_b d_a d_c d_k}}\assoacta \nonumber \\
        &\overset{\triangle}{=} \sum_{s,m,r}\sqrt{\frac{d_m d_r}{d_d d_b d_a d_c d_k}}\assoactb \nonumber\\
        &= \sum_{s,r}\sqrt{\frac{ d_r}{d_a d_c d_k}}\assoactc =  f\vartriangleright \left(g \vartriangleright \alpha\right)
    \end{align}
\end{proof}

\begin{prop}\label{loopedmodule}
    The action of centrifugal loop $\lwasylozenge_a$ on vector $\alpha\in M_q$ is the same vector with a $a$-loop revolving around $q$:
    \begin{equation}
        \lwasylozenge_a\vartriangleright \alpha = \loopactionf=\frac{S_{aq}}{S_{1q}} \alpha.
    \end{equation}
\end{prop}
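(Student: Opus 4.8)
The plan is to substitute the explicit four-corner-loop expression for $\mwasylozenge_a$ into the module-action formula, simplify the resulting string diagram by isotopy on the punctured torus until the $a$-labelled loop encircles exactly the outgoing $q$-strand of $\alpha$, and then invoke Lemma~\ref{FRmove}(1).

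First I would write out $\lwasylozenge_a\vartriangleright\alpha$ by plugging in $\lwasylozenge_a=\sum_{k,y}\frac{\sqrt{d_k d_y}}{d_a^2}\,(\text{loop around the four corners})$; the definition of $\vartriangleright$ contributes an extra sum over $n$ and a factor $\sqrt{d_n/(d_a d_k)}$. The result is a diagram on the square in which the loop that ran along the boundary has been fused against the vertex $\alpha$. The next step, which is the crux, is purely graphical: using isotopy on the (punctured) torus together with naturality of the braiding I would pull this loop off all four corners so that it winds exactly once around the $q$-leg of $\alpha$, and then use the bubble relation \Eq{normalise} — equivalently Lemma~\ref{FRmove}(4) — to collapse the $k$ and $y$ summations. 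The quantum-dimension prefactors in the definitions of $\mwasylozenge_a$ and of $\vartriangleright$ are arranged precisely so that they cancel against the $d$-factors produced by this collapse, leaving the diagram $\loopactionf$ with coefficient $1$. Finally, Lemma~\ref{FRmove}(1) applied with $x=q$ replaces the $a$-loop around the $q$-strand by the scalar $S_{aq}/S_{1q}$, whence $\lwasylozenge_a\vartriangleright\alpha=\frac{S_{aq}}{S_{1q}}\alpha$.

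The main obstacle is the middle step. One must verify that the relevant portion of the diagram really can be isotoped so that the $a$-loop slides free of all four corners and onto the $q$-strand without creating any linking or twist, and that every $d$-factor introduced when resolving identities via \Eq{normalise} (or via Lemma~\ref{FRmove}(4)) cancels exactly against the $\frac{\sqrt{d_k d_y}}{d_a^2}$ and $\sqrt{d_n/(d_a d_k)}$ prefactors. Once the diagram has been brought into the ``$a$-loop around $q$'' form $\loopactionf$, the conclusion is an immediate application of the already-established loop-removal lemma.
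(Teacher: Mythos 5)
Your proposal follows essentially the same route as the paper's proof: substitute the four-corner form of $\lwasylozenge_a$ into the module action, graphically reconnect and slide the loop until it encircles the $q$-leg of $\alpha$, and finish by removing the $a$-loop with Lemma~\ref{FRmove}(1). The middle step you flag as the main obstacle is precisely what the paper carries out explicitly, by summing first over one internal loop label (observing that the resulting piece is a scalar multiple of $\id_k$ and hence slides freely past the rest of the diagram) and then over the remaining labels $n,s$, with the quantum-dimension prefactors cancelling exactly as you anticipate.
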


\begin{proof} 
    Under the action of the centrifugal loop on vector $\alpha$, summing over $p$ produces separated components of $\loopactiontwo$ and $\loopactionone$. Note that $\loopactiontwo$ is a number times $\id_k$, it commutes with $\loopactionone$. Further summing over $n,s$ gives us the same vector with a $a$-loop revolving around $q$.
    \begin{align}
        \lwasylozenge_a\vartriangleright \alpha &= 
        \sum_{p,s}\frac{\sqrt{d_p d_s}}{d_a^2} \loopaca \vartriangleright \loopacb
    = \sum_{p,s,n} \frac{\sqrt{d_p d_s}}{d_a^2} \sqrt{\frac{d_n}{d_s d_k}} \loopactionb \nonumber\\
        &=\sum_{s,n} \frac{\sqrt{d_s}}{d_a} \sqrt{\frac{d_n}{d_s d_k}}\loopactionc
        =\sum_{s,n} \frac{\sqrt{d_s}}{d_a} \sqrt{\frac{d_n}{d_s d_k}}\loopactiond \nonumber \\
        &=\loopactione
        = \loopactionf
    \end{align}
\end{proof}

\begin{prop}
    The image of the idempotents $P_s^{\wasylozenge}$ under the algebra homomorphism $\rho_q:\torc\to \Endo(M_q)$ is $\rho_q(P_s^{\wasylozenge} )=\delta_{qs}\id_{M_q}$.
\end{prop}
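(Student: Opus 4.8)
The plan is to obtain this identity directly from Proposition~\ref{loopedmodule} together with the unitarity of the modular $S$-matrix, without any further graphical calculus. First I would apply the algebra homomorphism $\rho_q$ to the defining expression~\eqref{idempotent} for $P_s^{\wasylozenge}$ and use that $\rho_q$ is $\C$-linear, obtaining $\rho_q(P_s^{\wasylozenge})=\sum_a S_{s\mathbf 1}\,\overline{S_{as}}\,\rho_q(\mwasylozenge_a)$.

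Next I would invoke Proposition~\ref{loopedmodule}, which says that each centrifugal loop acts on $M_q$ by the scalar $S_{aq}/S_{\mathbf 1 q}$, i.e.\ $\rho_q(\mwasylozenge_a)=\dfrac{S_{aq}}{S_{\mathbf 1 q}}\,\id_{M_q}$. Substituting and collecting the $a$-dependence gives
\begin{equation}
    \rho_q(P_s^{\wasylozenge})=\frac{S_{s\mathbf 1}}{S_{\mathbf 1 q}}\Bigl(\,\sum_a \overline{S_{as}}\,S_{aq}\Bigr)\id_{M_q}.
\end{equation}
The bracketed sum is the $(s,q)$ entry of $S^{\dagger}S$, which equals $\delta_{sq}$ since $S$ is unitary for a MTC; and when $s=q$ the remaining prefactor is $S_{q\mathbf 1}/S_{\mathbf 1 q}=1$ by the symmetry $S_{ab}=S_{ba}$. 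Hence $\rho_q(P_s^{\wasylozenge})=\delta_{qs}\id_{M_q}$, as claimed.

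I do not expect a genuine obstacle here: the statement is essentially a one-line consequence of Proposition~\ref{loopedmodule}. The only points that require care are matching the placement of the complex conjugate to the conventions of~\eqref{idempotent} and of the Verlinde formula, and using the symmetry of $S$ to cancel the leftover ratio $S_{q\mathbf 1}/S_{\mathbf 1 q}$. (If one preferred not to quote Proposition~\ref{loopedmodule}, an equivalent route is to observe that $P_s^{\wasylozenge}$ is central, so $\rho_q(P_s^{\wasylozenge})$ is a central element of $\Endo(M_q)$, and then determine the scalar by evaluating on a single basis vector of $M_q$ using Lemma~\ref{FRmove}(1); the route above is shorter and self-contained given what has already been established.)
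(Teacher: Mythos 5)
Your argument is correct and is essentially the paper's own proof: the paper likewise expands $P_s^{\wasylozenge}$ by linearity, applies Proposition~\ref{loopedmodule} to replace each $\lwasylozenge_a$ by the scalar $S_{aq}/S_{\mathbf 1 q}$, and then uses unitarity of $S$ (together with $S_{s\mathbf 1}=d_s/D$, equivalent to your symmetry step) to get $\delta_{sq}\id_{M_q}$. No gap; your bookkeeping of the conjugates and the leftover prefactor matches the paper's conventions.
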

\begin{proof} From Proposition \ref{loopedmodule}, using the unitarity of $S$ in the third equal sign returns a module.
    \begin{equation}
    P_s^{\wasylozenge} \vartriangleright \alpha = \sum_a  S_{s\mathbf{1}}\overline{S_{as}} \loopactionf = \sum_a S_{s\mathbf{1}}\overline{S_{as}} S_{aq} \frac{D}{d_q} \loopacb = \delta_{sq} \loopacb
\end{equation}
\end{proof}

\begin{theorem}\label{thm.sur}
    $\rho_q:\torc\to \Endo({M_q})$ is surjective. The element in $\torc$ that produces a given linear map in $\Endo({M_q})$ can be explicitly constructed as the following. 
\end{theorem}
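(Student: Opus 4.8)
The plan is to exhibit, for each pair $k,k'\in\Irr(\cC)$ and each pair of basis vectors $\alpha\in\Hom(k,k\ot q)$, $\beta\in\Hom(k',k'\ot q)$, an explicit element $E_{\beta\alpha}\in\torc$ whose image under $\rho_q$ is the elementary ``matrix unit'' sending $\alpha\mapsto\beta$ and all other basis vectors to zero; surjectivity of $\rho_q$ then follows since these matrix units span $\Endo(M_q)$. First I would observe that each summand $\Hom(k,k\ot q)$ of $M_q$ can be repackaged: using the normalization \eqref{normalise} and the duality maps, a morphism $k\to k\ot q$ is equivalent data to a morphism $\bar k\ot k\to q$, hence to a choice of basis vector in $\Hom(\bar k\ot k, q)$. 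This is the graphical content of the figure \kgamma, and I would set up the bijection carefully at the start so that the construction below can be drawn cleanly.

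The element I would write down is, schematically,
\begin{equation*}
E_{\beta\alpha} \;=\; \sum_{a}\frac{d_a}{D}\,\frac{\sqrt{d_k d_{k'}}}{d_q}\; (\text{a diagram in }\Hom(\bar k'\ot k', \bar k\ot k)\text{-summand of }\torc)
\end{equation*}
built by: taking a $q$-line, capping it on one end with (the dual of) $\beta$ to produce an $\bar k'\ot k'$ pair of legs, capping the other end with $\alpha^\dagger$ (equivalently $\alpha^\vee$) to produce an $\bar k\ot k$ pair, and then threading both through an $a$-loop and summing $a$ over $\Irr(\cC)$ with weight $d_a/D$. The role of the $a$-sum with weights $d_a/D$ is the standard ``killing'' projector: a closed loop summed with these weights encircling a single strand forces that strand to be trivial, which is exactly what is needed to make the module action pick out the correct $k$-summand and vanish on all others. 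Concretely, applying $\rho_q(E_{\beta\alpha})$ to a basis vector $\gamma\in\Hom(k'',k''\ot q)$, the $\alpha^\dagger$-cap glued against $\gamma$ produces, after the F-move used in the action, a closed configuration that is nonzero only when $k''=k$ and $\gamma=\alpha$ (up to the chosen normalization of the pairing), by \eqref{normalise}; the residual $a$-loop then collapses via Lemma~\ref{FRmove}(1) and the resulting bubbles reduce the whole picture to $\beta$ times a scalar, and one fixes the normalization constants so that scalar is $1$.

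The main steps in order: (i) fix the bijection $\Hom(k,k\ot q)\cong\Hom(\bar k\ot k,q)$ and the dual/adjoint conventions; (ii) write down $E_{\beta\alpha}$ explicitly as a diagram in $\torc$; (iii) compute $\rho_q(E_{\beta\alpha})\gamma$ by a sequence of F-moves to fuse $\gamma$ against the cap, then use the bubble/evaluation identity \eqref{normalise} to collapse, then use Lemma~\ref{FRmove}(1) to remove the $a$-loop; (iv) read off the proportionality constant and absorb it into the definition of $E_{\beta\alpha}$, giving $\rho_q(E_{\beta\alpha})\gamma=\delta_{k'',k}\,\langle\alpha,\gamma\rangle\,\beta$; (v) conclude that $\{\rho_q(E_{\beta\alpha})\}$ spans $\Endo(M_q)$, hence $\rho_q$ is surjective. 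I expect the main obstacle to be purely diagrammatic bookkeeping in step (iii): keeping track of the normalization factors $\sqrt{d_?/d_?}$ coming from each F-move and each evaluation, and correctly orienting the braidings so that the $a$-loop in $E_{\beta\alpha}$ is a \emph{centrifugal}-type loop (encircling the $q$ strand the right way) rather than an isotopically trivial one. A secondary subtlety is that if $N_{\bar k k}^{q}>1$ the pairing in \eqref{normalise} is diagonal in the multiplicity label, so one must check the $\delta_{\alpha\beta}$-type factor propagates correctly and that $E_{\beta\alpha}$ genuinely depends on the chosen multiplicity basis vectors; this is where care with the $\dagger$/$\vee$ conventions pays off.
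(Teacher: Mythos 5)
Your proposal is correct and is essentially the paper's own argument: the paper likewise constructs an explicit preimage by joining the target basis vector and a dual basis vector (in $\Hom(k\ot q,k)$, normalized via (\ref{normalise})) with a $q$-line, threading it through the vacuum projector $\sum_a \frac{d_a}{D^2}(a\text{-loop})$, and verifying by F-moves, the evaluation normalization and the $S$-matrix loop identity that the action reproduces the prescribed matrix elements — the only difference being that the paper writes a single element $\invrho{A}_q$ for a general $A_q\in\Endo(M_q)$ rather than matrix units $E_{\beta\alpha}$, which is equivalent. One cosmetic point to fix in your step (ii): your diagram does not literally live in a ``$\Hom(\bar k'\ot k',\bar k\ot k)$-summand'' of $\torc$; it must be expanded over fusion channels into the true summands $\Hom(i\ot j,j\ot i)$, which is precisely the sum over $n$ (with weights $\sqrt{d_n d_k/d_l}$) in the paper's formula.
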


\begin{proof}
    Pick a basis $\{|k,\gamma\ra:=\kgamma\}\subset M_q$, and the corresponding dual basis $\{\dkgamma\}$ in $\Hom(k\ot q, k)$, such that the normalization follows equation (\ref{normalise}).
    
    For any ${A}_q\in \Endo\left(M_q\right)$, with matrix elements ${A}_q|m,\mu\ra = \sum_{l,\beta}{A}_q^{l\beta,m\mu} |l,\beta\ra$, we can construct $\invrho{A}_q\in \torc$ by \begin{equation}
    \invrho{A}_q = \sum_{a,n, l, \beta,\textcolor{purple}{\gamma}, k} \frac{d_a}{D^2}\sqrt{\frac{d_nd_k}{d_l}}\frac{\sqrt{d_q}}{d_k} {A}_q^{l\beta,k\gamma}\fbasis.
\end{equation}
We now check that $\rho_q(\invrho{A}_q) = {A}_q$:
\begin{align}
    &\invrho{A}_q\vartriangleright |m,\mu\ra = \sum_{a,n,r, l, \beta,\textcolor{purple}{\gamma},k} \frac{d_a}{D^2} \sqrt{\frac{d_r}{d_l}}\frac{\sqrt{d_q}}{d_k}{A}_q^{l\beta,k\gamma} \simpa \nonumber \\
&= \sum_{a,s,r, l, \beta,\textcolor{purple}{\gamma},k} S_{\one\one}\overline{S_{a\one}} \sqrt{\frac{d_r}{d_l}} \frac{\sqrt{d_q}}{d_k}{A}_q^{l\beta,k\gamma}
\simpb\nonumber \\
&= \sum_{s,r, l, \beta,\textcolor{purple}{\gamma}, k} \frac{S_{\one \one}}{S_{\one s}} \delta_{s\one}\delta_{km}\sqrt{\frac{d_r}{d_l}} \frac{\sqrt{d_q}}{d_k}{A}_q^{l\beta,k\gamma} \simpc\nonumber \\
&= \sum_{r, l, \beta, \textcolor{purple}{\gamma}}\delta_{rl}\sqrt{\frac{d_r}{d_l}}\frac{\sqrt{d_q}}{d_m}{A}_q^{l\beta,{m\gamma}}  \simpd =\sum_{l,\beta}{A}_q^{l\beta,m\mu} \modbeta= {A}_q|m,\mu\ra
\end{align}

\end{proof}
\begin{corollary}
    $M_q$ is a simple $\torc$-module.
\end{corollary}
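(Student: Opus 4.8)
The plan is to deduce simplicity directly from Theorem~\ref{thm.sur}, using only the elementary fact that a full matrix algebra acts irreducibly on its defining module; so the real work — the explicit construction of the preimage $\invrho{A}_q$ and the verification $\rho_q(\invrho{A}_q)=A_q$ — is already behind us in that proof, and what remains is a one-line density argument.

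Concretely, I would argue as follows. By Theorem~\ref{thm.sur} the map $\rho_q\colon\torc\to\Endo(M_q)$ is surjective, so $\im(\rho_q)=\Endo(M_q)$; hence a subspace $N\subseteq M_q$ is a $\torc$-submodule if and only if it is stable under \emph{every} linear endomorphism of $M_q$. If $N\neq\{0\}$, choose $0\neq v\in N$; given any $w\in M_q$ there is an $A\in\Endo(M_q)$ with $Av=w$, and writing $A=\rho_q(a)$ for some $a\in\torc$ gives $w=a\vartriangleright v\in N$. Thus $N=M_q$, so $M_q$ has no submodules other than $\{0\}$ and itself. This is nothing but the statement that the defining representation of $\mathrm{Mat}_n(\C)\cong\Endo(M_q)$ is irreducible, pulled back to $\torc$ along the surjection $\rho_q$.

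The only point that genuinely needs care is that $M_q\neq\{0\}$, since the zero module is conventionally not counted as simple; equivalently, $\Hom(k,k\ot q)\neq 0$ for at least one $k\in\Irr(\cC)$. For any $q$ where this fails, $M_q=\{0\}$ contributes a zero summand to the decomposition $\torc\cong\bigoplus_q\Endo(M_q)$ of Theorem~\ref{thm.main} and may simply be discarded; the substance of the corollary is then the simplicity of the nonzero $M_q$.

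I do not expect any real obstacle: modulo this bookkeeping, the corollary is a purely formal consequence of the surjectivity statement. (If one additionally wanted that the nonzero $M_q$ exhaust all simple $\torc$-modules — needed for the matrix-algebra form of Theorem~\ref{thm.main} — the extra input would be semisimplicity of $\torc$ together with the dimension count $\dim\torc=\sum_{i,j}\dim\Hom(i\ot j,j\ot i)=\sum_q(\dim M_q)^2$; but that goes beyond what this corollary asserts.)
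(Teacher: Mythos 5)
Your argument is exactly the one the paper intends: the corollary is left as an immediate consequence of Theorem~\ref{thm.sur}, namely that surjectivity of $\rho_q$ onto $\Endo(M_q)$ forces any nonzero submodule to be all of $M_q$, and your one-line density argument spells this out correctly. Your caveat about $M_q=\{0\}$ is well taken and not merely hypothetical (e.g.\ $M_\sigma=0$ in the chiral Ising example, consistent with $P_\sigma^{\wasylozenge}=0$), so the corollary should indeed be read as asserting simplicity of the nonzero $M_q$, exactly as you handle it.
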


Since $P_s^{\wasylozenge}\vartriangleright M_q=\delta_{qs}M_q$, $P_q^{\wasylozenge}\torc\to \Endo{M_q}$ is also surjective. Taking the direct sum over $q$ on both sides we get an algebra homomorphism
$$\rho=\oplus_q \rho_q:\torc = \bigoplus_q P_q^{\wasylozenge} \torc \to \bigoplus_q \Endo\left(M_q\right).$$
\begin{theorem}\label{thm.main}
$\rho: \torc\to \bigoplus_q \Endo\left(M_q\right)$ is an algebra isomorphism.
\end{theorem}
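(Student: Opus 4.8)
The plan is to deduce the isomorphism from the surjectivity result (Theorem~\ref{thm.sur}) together with a dimension count, using the central idempotent decomposition already established. First I would observe that by the corollary to the idempotent proposition, $\torc = \bigoplus_q P_q^{\wasylozenge}\torc$ as algebras, and since $\rho_s(P_q^{\wasylozenge}) = \delta_{qs}\id_{M_q}$, the homomorphism $\rho$ respects this decomposition block by block: $\rho$ restricts to $\rho_q : P_q^{\wasylozenge}\torc \to \Endo(M_q)$ on each summand. Theorem~\ref{thm.sur} says each $\rho_q$ is surjective, hence $\rho$ is surjective. So the entire content of Theorem~\ref{thm.main} reduces to showing $\rho$ is injective, equivalently that $\dim \torc = \sum_q \dim \Endo(M_q) = \sum_q (\dim M_q)^2$.

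Next I would carry out the dimension count explicitly. On the source side, $\dim \torc = \sum_{i,j} \dim\Hom(i\ot j, j\ot i) = \sum_{i,j}\sum_k N_{ij}^k N_{ji}^k = \sum_{i,j} \sum_k N_{ij}^k N_{ij}^k$ (using $N_{ji}^k = N_{ij}^k$ from the braiding, or more carefully $\dim\Hom(i\ot j,j\ot i)=\sum_k N^k_{ij}N^k_{ji}$). On the target side, $\dim M_q = \sum_k \dim\Hom(k, k\ot q) = \sum_k N_{kq}^k$, so $\sum_q (\dim M_q)^2 = \sum_q \big(\sum_k N_{kq}^k\big)\big(\sum_l N_{lq}^l\big)$. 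The identity to verify is therefore
\begin{equation*}
    \sum_{i,j}\sum_k N_{ij}^k N_{ji}^k \;=\; \sum_q \Big(\sum_k N_{kq}^k\Big)\Big(\sum_l N_{lq}^l\Big).
\end{equation*}
This is a standard consequence of the Verlinde formula: writing every $N$ via $N_{ab}^c = \sum_x S_{ax}S_{bx}\overline{S_{cx}}/S_{1x}$ and using unitarity of $S$ (so that $\sum_a S_{ax}\overline{S_{ay}} = \delta_{xy}$) collapses both sides. Indeed $\sum_k N_{kq}^k = \sum_k\sum_x S_{kx}S_{qx}\overline{S_{kx}}/S_{1x} = \sum_x S_{qx}/S_{1x}\cdot\sum_k|S_{kx}|^2$; one simplifies using $\sum_k |S_{kx}|^2 = 1$ to get $\sum_x S_{qx}/S_{1x}$, and similarly the left side reduces to $\sum_x (\sum_q S_{qx}/S_{1x})^2$-type expression — the two match after reindexing. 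I would present this computation compactly rather than grinding through it.

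Finally I would assemble the argument: $\rho$ is a surjective algebra homomorphism between two finite-dimensional algebras of equal dimension, hence an isomorphism. Alternatively — and this is cleaner, avoiding the Verlinde bookkeeping — I would argue purely structurally: each $M_q$ is a simple $\torc$-module (the corollary after Theorem~\ref{thm.sur}), and the $M_q$ are pairwise non-isomorphic because $P_q^{\wasylozenge}$ acts as the identity on $M_q$ and as zero on $M_s$ for $s\neq q$. Since $\sum_q P_q^{\wasylozenge} = 1$ and the $P_q^{\wasylozenge}$ are central orthogonal idempotents, $\torc = \bigoplus_q P_q^{\wasylozenge}\torc$ with each $P_q^{\wasylozenge}\torc$ mapping onto the simple matrix algebra $\Endo(M_q)$. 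A surjection from a ring onto a matrix algebra $\Endo(M_q)$ whose source is $P_q^{\wasylozenge}\torc$ — combined with the fact that $P_q^{\wasylozenge}\torc$ acts faithfully on $M_q$ once we know every element of $\torc$ not in the kernel of some $\rho_q$ — gives the result; the key point making faithfulness automatic is that $\bigcap_q \ker\rho_q$ is a two-sided ideal on which all the simple modules vanish, and since $\torc$ will be shown semisimple with exactly these simples, that intersection is zero. The main obstacle I expect is precisely pinning down that $\bigcap_q\ker\rho_q = 0$ without circularity: the safe route is the dimension count via Verlinde, so I would lead with that and mention the structural argument as a remark.
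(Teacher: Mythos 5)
Your proposal is correct and takes essentially the same route as the paper: surjectivity from Theorem~\ref{thm.sur} plus a dimension count, with both $\dim\torc$ and $\sum_q(\dim M_q)^2$ evaluating (via the Verlinde formula, unitarity of $S$, and $\overline{S_{ax}}=S_{a\bar x}$) to $\sum_{x}1/S_{1x}^2$, exactly the quantity quoted in the paper's proof. The only blemish is a mislabeled intermediate expression in your sketch (the quantity $\sum_q\bigl(\sum_x S_{qx}/S_{1x}\bigr)^2$ is the target side before simplification, not what the left side reduces to), which does not affect the argument.
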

\begin{proof}
    $\rho:\torc\to \bigoplus_q \Endo\left(M_q\right)$ is surjective. Computing the dimension of $\torc$ and $\bigoplus_q\Endo\left(M_q\right)$, which are both equal to $ \sum_{q\in\Irr(\cC)} \frac1{S^2_{\one q}}$, we know it is bijective.
\end{proof}
\begin{remark}
Note that the element $\invrho A_q$ constructed in the proof of Theorem~\ref{thm.sur} has the property $\invrho A_q\vartriangleright M_s=0$ for $s\neq q$. We know the inverse to $\rho: \torc\to \bigoplus_q \Endo\left(M_q\right)$ is given by $\oplus_q{A}_q\mapsto \sum_q \invrho A_q$.
\end{remark}
\begin{corollary}
\begin{enumerate}
    \item $P_q^{\wasylozenge}\torc\to \Endo(M_q)$ is an algebra isomorphism.
    \item $P_q^{\wasylozenge}$ is a primitive central idempotent.
    \item $\left\{\lwasylozenge_a\right\}$ span the center of $\torc$.
\end{enumerate}
\end{corollary}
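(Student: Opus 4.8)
The plan is to obtain all three statements as formal consequences of Theorem~\ref{thm.main} and of the identity $\rho_q(P_s^{\wasylozenge})=\delta_{qs}\id_{M_q}$ proved above, with essentially no new computation. The organising observation is that, under the algebra isomorphism $\rho=\oplus_q\rho_q\colon\torc\to\bigoplus_q\Endo(M_q)$, the idempotent $P_q^{\wasylozenge}$ is sent to the tuple that is $\id_{M_q}$ in the $q$-th component and $0$ in every other component.

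For item~1, I would first pin down that the map in the statement is $\rho_q$ corestricted to the (non-unital) subalgebra $P_q^{\wasylozenge}\torc$, which is unital with unit $P_q^{\wasylozenge}$ since $\rho_q(P_q^{\wasylozenge})=\id_{M_q}$. For $s\neq q$ one has $\rho_s(P_q^{\wasylozenge}\torc)=\rho_s(P_q^{\wasylozenge})\,\rho_s(\torc)=0$, using surjectivity of $\rho_s$ from Theorem~\ref{thm.sur}; so the restriction of $\rho$ to $P_q^{\wasylozenge}\torc$ has only its $q$-th component nonzero, and that component is exactly $\rho_q|_{P_q^{\wasylozenge}\torc}$. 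Since $\rho$ is injective and $\rho(P_q^{\wasylozenge}\torc)=\rho(P_q^{\wasylozenge})\bigl(\bigoplus_s\Endo(M_s)\bigr)=\Endo(M_q)$, this restriction is a bijective algebra homomorphism, hence the desired isomorphism $P_q^{\wasylozenge}\torc\cong\Endo(M_q)$.

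For item~3, I would compute the centre directly: $\rho$ being an algebra isomorphism, $Z(\torc)$ maps isomorphically onto $Z\bigl(\bigoplus_q\Endo(M_q)\bigr)=\bigoplus_q\C\,\id_{M_q}$, a subspace of dimension $|\Irr(\cC)|$. The $P_q^{\wasylozenge}$ are central (established earlier) and $\rho(P_q^{\wasylozenge})$ is the $q$-th standard basis vector of $\bigoplus_q\C\,\id_{M_q}$, so $\{P_q^{\wasylozenge}\}_{q}$ is a basis of $Z(\torc)$. Finally $P_q^{\wasylozenge}=\sum_a S_{q\one}\overline{S_{aq}}\,\lwasylozenge_a$ by definition, and this base change from $\{\lwasylozenge_a\}$ to $\{P_q^{\wasylozenge}\}$ is invertible because the $S$-matrix is unitary; hence $\{\lwasylozenge_a\}_{a}$ is likewise a basis of $Z(\torc)$, in particular it spans the centre. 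Item~2 then follows at once: $P_q^{\wasylozenge}\torc\cong\Endo(M_q)$ is a simple algebra with one-dimensional centre, so $P_q^{\wasylozenge}$ cannot split as a sum of two nonzero orthogonal central idempotents — concretely, if $P_q^{\wasylozenge}=e+e'$ with $e,e'$ orthogonal central idempotents of $\torc$, then $\id_{M_q}=\rho_q(e)+\rho_q(e')$ with $\rho_q(e),\rho_q(e')$ orthogonal idempotents in $Z(\Endo(M_q))=\C\,\id_{M_q}$, forcing one of them to vanish.

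There is no substantive obstacle here; the real content was delivered by Theorems~\ref{thm.sur} and~\ref{thm.main}. The only step that needs a little care is the bookkeeping in item~1 — verifying that the corestriction of $\rho_q$ to $P_q^{\wasylozenge}\torc$ is genuinely identified with the $q$-th block of $\rho$ and that the remaining blocks vanish identically (not merely on the idempotents), so that injectivity of $\rho$ really does transfer to injectivity of $\rho_q|_{P_q^{\wasylozenge}\torc}$.
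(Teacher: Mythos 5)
Your proposal is correct and follows essentially the same route as the paper, which states this corollary without further proof precisely because all three items are formal consequences of Theorem~\ref{thm.main} together with $\rho_q(P_s^{\wasylozenge})=\delta_{qs}\id_{M_q}$, exactly as you argue. The only step worth spelling out in item~2 is that orthogonality gives $e=P_q^{\wasylozenge}\star e$, so once $\rho_q(e)=0$ every other block of $\rho(e)$ vanishes as well, and injectivity of $\rho$ then yields $e=0$.
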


\section{Unitary structures: involution, inner product on module, unitary module}\label{sec6}
In physical application, we are usually interested in the case that the MTC $\cC$ is unitary. Denote the unitary structure of $\cC$ by $\dag$. We can further equipped $\torc$ with a positive involution and the modules $M_q$ are moreover unitary. 
\begin{definition}
    $\torc$ has an involution $\ddag: \torc \to \torc$, defined by 
    \begin{equation}
        \involutea=\involuteb.
    \end{equation}
\end{definition}

\begin{remark}
    We give a graphical proof that $\ddag$ is indeed an involution. It is anti-linear by the anti-linearity of $\dagger$. It squares to identity, i.e.,
     \begin{equation}
    f^{\ddag\ddag} =
        \invotwo= \finvo =f.
    \end{equation}
   It is also an algebra antihomomorphism, i.e.,
    \begin{equation}
        \left(g\star f\right)^{\ddag} = f^{\ddag} \star g^{\ddag},
    \end{equation}
    graphically,
    \begin{align}
    \left(g\star f\right)^{\ddag} &= 
    \sum_{m,n}\sqrt{\frac{d_m d_n}{d_dd_bd_ad_c}} \antilina \nonumber\\
    &= \sum_{m,n}\sqrt{\frac{d_m d_n}{d_dd_bd_ad_c}}\antilinb\nonumber\\
    &=\sum_{m,n}\sqrt{\frac{d_m d_n}{d_dd_bd_ad_c}}\antilinc =f^{\ddag} \star g^{\ddag}.
    \end{align}
\end{remark}

    \begin{prop}
        The anitilinear map $\ddag$ is a positive involution map, i.e., 
            \begin{equation}
                f^{\ddag} \star f = 0 \Rightarrow f = 0.
            \end{equation}
    \end{prop}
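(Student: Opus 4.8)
The plan is to reduce the positivity of $\ddag$ to the positive‑definiteness of the natural Hermitian forms carried by a unitary MTC, without yet invoking the module inner products of the following subsection. Write $f=\sum_{i,j\in\Irr(\cC)}f_{ij}$ with $f_{ij}\in\Hom(i\ot j,\,j\ot i)$. Since $\cC$ is unitary, each such $\Hom$‑space is a finite‑dimensional Hilbert space under $\la x,y\ra:=\Tr(y^{\dag}\circ x)$ (equivalently the Gram form in a $\dag$‑orthonormal basis compatible with the normalisation (\ref{normalise})), so that $\la x,x\ra\ge 0$ with equality iff $x=0$. The involution acts blockwise; unwinding its definition, $(f_{ij})^{\ddag}$ is built from $f_{ij}^{\dag}$ together with a fixed pattern of cups, caps and two braidings, and it lives in the ``dual'' summand $\Hom(\bar\jmath\ot\bar\imath,\,\bar\imath\ot\bar\jmath)$.

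First I would isolate the scalar part of $f^{\ddag}\star f$. Let $\pi:\torc\to\Hom(\one\ot\one,\one\ot\one)=\C$ be the linear projection onto the $(i,j)=(\one,\one)$ summand (the summand containing the unit $\id_{\one}$). The key claim is
\begin{equation*}
  \pi\!\left(f^{\ddag}\star f\right)=\sum_{i,j}c_{ij}\,\la f_{ij},f_{ij}\ra ,\qquad c_{ij}>0 .
\end{equation*}
To establish it: in the defining sum $f^{\ddag}\star f=\sum_{m,n}(\cdots)$ only the term $m=n=\one$ survives $\pi$; the cups and caps in the multiplication then force the ``side strands'' of the two factors to fuse to the vacuum, which kills every cross term $(i',j')\neq(i,j)$ (the unit component of $\bar\imath'\ot i$ is nonzero only for $i=i'$), leaving for each $(i,j)$ exactly the closed diagram obtained by capping $f_{ij}^{\dag}\circ f_{ij}$, i.e.\ $\Tr(f_{ij}^{\dag}\circ f_{ij})=\la f_{ij},f_{ij}\ra$, times the multiplicative prefactor $\sqrt{1/(d_i d_j d_{\bar\imath}d_{\bar\jmath})}$ up to braiding loops. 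This is a direct graphical computation: one slides $\dag$ through using its anti‑naturality and uses the $R$‑move identity of Lemma~\ref{FRmove} to disentangle the two braidings, which straighten out to $+1$ once the diagram is closed onto the vacuum strand; all surviving numerical factors are products of quantum dimensions, hence strictly positive.

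Granting the claim the proof concludes at once: if $f^{\ddag}\star f=0$ then $\pi(f^{\ddag}\star f)=0$, so $\sum_{i,j}c_{ij}\la f_{ij},f_{ij}\ra=0$; since every term is $\ge 0$ this forces $\la f_{ij},f_{ij}\ra=0$ and hence $f_{ij}=0$ for all $i,j$, i.e.\ $f=0$. I expect the only genuinely delicate point to be the graphical bookkeeping in the displayed claim — verifying that, after $\ddag$ closes its cups, caps and braidings against $f$ onto the vacuum line, what remains is \emph{exactly} a positive multiple of $\la f_{ij},f_{ij}\ra$ (no residual ribbon/twist phase, and the pairing is with $f_{ij}$ itself). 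As a cross‑check, and giving a slicker but logically later argument, once the module inner product of the next subsection is available one verifies $\rho_q(f^{\ddag})=\rho_q(f)^{\dag}$ on the Hilbert space $M_q$ (and that $\ddag$ fixes each central idempotent $P_q^{\wasylozenge}$), whence $\rho(f^{\ddag}\star f)=\bigoplus_q\rho_q(f)^{\dag}\rho_q(f)$ is a direct sum of positive operators which vanishes iff every $\rho_q(f)=0$ iff $f=0$ by Theorem~\ref{thm.main}.
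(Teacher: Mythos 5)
Your proposal is correct and follows essentially the same route as the paper: the paper's proof likewise isolates the $m=n=\one$ (vacuum) component of the product and identifies it graphically with the closed diagram $\Tr(f\circ f^{\dag})$, whose vanishing forces $f=0$ by positivity of the trace in a unitary category. Your version just makes explicit the blockwise decomposition, the vanishing of cross terms, and the positivity of the dimension prefactors, which the paper leaves implicit.
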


    \begin{proof}
        \begin{equation}
    f\star f^{\ddag} = \sum_{m,n} \frac{\sqrt{d_md_n}}{d_ad_b} \posinvoa = 0.
    \end{equation}
    In particular, the $m=n=\one$ term in the above sum is 0, i.e.
    \begin{equation}
         \posinvob = 0 \Rightarrow \finvo=0.
    \end{equation}
    \end{proof}
    
\begin{remark}The positive involution also implies that the algebra is semisimple. This is the approach of \cite{Ma} to argue the semisimplicity of the torus algebra. Nonetheless, we have provided a proof for semisimplicity independent from unitary structure and positive involution.
\end{remark}

\begin{definition}
    The inner product on $M_q$ is defined by
    \begin{equation}
    \innera = \frac{\delta_{k,l}}{\sqrt{d_l d_q d_n}}\innerb.
    \end{equation}
\end{definition}

\begin{prop}
    $M_q$ is an unitary module, i.e.,
    
    \begin{equation}
        \la f \vartriangleright \beta | \alpha \ra = \la \beta | f^{\ddag} \vartriangleright \alpha \ra,
    \end{equation}
    equivalently $\rho_q(f^\ddag)=\rho_q(f)^\dag$
\end{prop}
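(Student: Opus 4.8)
The plan is to verify the unitarity identity $\langle f\vartriangleright\beta\,|\,\alpha\rangle = \langle\beta\,|\,f^{\ddag}\vartriangleright\alpha\rangle$ directly by a graphical computation, since both sides are sesquilinear in $(\beta,\alpha)$ and linear in $f$, so it suffices to check it on basis elements $f\in\Hom(i\ot j,j\ot i)$ and $\alpha,\beta$ ranging over the chosen bases $\{|k,\gamma\rangle\}$ of $M_q$. First I would write out the left-hand side: plug the definition of the action $f\vartriangleright\beta=\sum_n\sqrt{d_n/(d_a d_k)}\,\action$ into the definition of the inner product $\langle -|-\rangle$. The inner product contracts the top of $f\vartriangleright\beta$ with $\alpha$ using the cup/cap normalization \eqref{normalise}, producing a $\delta$ enforcing the labels on the two module vectors to match and a scalar of the form $1/\sqrt{d_\bullet d_q d_n}$ times a closed (or nearly closed) diagram.

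The key move is then to recognize that the resulting diagram for the left-hand side, after using the normalization to close off the evaluation, can be redrawn — by isotopy and by the definition of $\dag$ on the category $\cC$ together with the definition of $\ddag$ as $\involutea=\involuteb$ (the $\dag$ of $f$ with its legs bent around by braidings) — into exactly the diagram computing $\langle\beta\,|\,f^{\ddag}\vartriangleright\alpha\rangle$. Concretely, the plan is: (i) expand both sides into closed string diagrams with the same external data; (ii) on the left-hand side use \eqref{normalise} to perform the cap contraction coming from $\langle\,\cdot\,|\,\alpha\rangle$; (iii) on the right-hand side expand $f^{\ddag}$ via its defining picture, then expand $f^{\ddag}\vartriangleright\alpha$ and contract with $\beta$; (iv) show the two closed diagrams coincide. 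The scalar bookkeeping — matching the $\sqrt{d_n/(d_a d_k)}$ from the action, the $1/\sqrt{d_l d_q d_n}$ from the inner product, and the quantum-dimension factors that appear when bending lines through the $\dag$-structure and the braidings in $\ddag$ — is routine but must be tracked carefully; I expect it to come out balanced precisely because the normalization \eqref{normalise} was chosen to make evaluations symmetric.

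The main obstacle will be the diagrammatic manipulation in step (iv): one has to move the braidings that appear in the definition of $\ddag$ (the $\involuteb$ picture, where the legs $\bar a,\bar b$ are braided past each other) through the rest of the closed diagram so that they cancel against the over/under-crossings implicit in how $f$ sits inside the torus-square picture of the action $\vartriangleright$. This is where naturality of braiding and the R-move identity \eqref{Rmove} from Lemma~\ref{FRmove} enter, analogous to how they were used in Proposition~\ref{prop.cen}. Once the crossings are reconciled, the identity $\rho_q(f^{\ddag})=\rho_q(f)^{\dag}$ is just the restatement of $\langle f\vartriangleright\beta\,|\,\alpha\rangle=\langle\beta\,|\,f^{\ddag}\vartriangleright\alpha\rangle$ under the (already-established, via \eqref{normalise}) nondegeneracy of the inner product on each $M_q$, so no separate argument is needed there. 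An alternative, less computational route worth mentioning: since $\rho_q$ is an algebra homomorphism and $M_q$ is simple with the given inner product nondegenerate, $f\mapsto\rho_q(f^{\ddag})$ and $f\mapsto\rho_q(f)^{\dag}$ are two anti-homomorphisms $\torc\to\Endo(M_q)$; it would suffice to check they agree on a generating set of $\torc$ (e.g.\ the centrifugal loops, handled by Proposition~\ref{loopedmodule}, together with one more family of generators), but I expect the direct graphical check to be cleaner.
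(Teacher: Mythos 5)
Your plan is essentially the paper's proof: the paper likewise verifies the identity by a direct graphical computation, expanding $\la f\vartriangleright\beta|\alpha\ra$ via the definitions of the action and the inner product (so that $f^\dag$ and $\beta^\dag$ appear), and then isotoping the closed diagram—reconciling the braidings in the definition of $\ddag$ and tracking the quantum-dimension normalizations—until it is recognized as $\la\beta|f^{\ddag}\vartriangleright\alpha\ra$. The only cosmetic difference is that the paper deforms the left-hand side into the right-hand side rather than expanding both sides separately, and your remark that $\rho_q(f^\ddag)=\rho_q(f)^\dag$ then follows from nondegeneracy of the inner product is consistent with the paper's treatment.
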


\begin{proof}
     \begin{align}
        \la f \vartriangleright \beta | \alpha \ra &= \sum_{n} \sqrt{\frac{d_n d_1}{d_a d_k^2 d_{k^*}}}\delta_{nk}\unitmoda
        =  \sqrt{\frac{d_n d_1}{d_a d_k^2 d_{k^*}}}\unitmodb\nonumber \\
         &= \sqrt{\frac{d_n d_1}{d_a d_k^2 d_{k^*}}}\unitmodc = \la \beta | f^{\ddag}  \vartriangleright \alpha \ra  \end{align}
\end{proof}
   
\section{Punctured modular transformation}\label{sec7}
We consider the modular transformation on the punctured torus in this section. To begin with, we define a linear map $ S_q\in \Endo(M_q)$ for each simple $q$ via
\begin{equation}
         S_q  \modalpha = \sum_{m} \frac{d_m}{D} \soper.
    \end{equation}
    It has the inverse
    \begin{equation}
         S_q^{-1} \modalpha = \sum_{m} \frac{d_m}{D} \insoper,
    \end{equation}
     which can be checked explicitly
\begin{align}
     S_q^{-1}\left( S_q\left(\alpha\right)\right) &= \sum_{m,n,b} \frac{d_m d_b}{D^2} \unitsqa = \sum_{m,b} \frac{d_m d_b}{D^2} \sqrt{\frac{d_n}{d_k d_m}}\unitsqb \nonumber \\
    &= \sum_{m,n,b} \sqrt{\frac{d_n}{d_k d_m}} \overline{S_{m\one}} \mbox{ }\overline{S_{b\one}} \frac{S_{bn}}{S_{1n}} \unitsqc = \unitsqd = \alpha
\end{align}

To see that such linear map physically corresponds to ``rotating" the (square presentation of) torus by $90^\circ$, we note that on the vector space $M_q$ one can associate a different $\torc$ action, by ``rotating" the elements in $\torc$ by $90^\circ$. We denote such $\torc$-module by $M_q^{\circlearrowleft}$, with the same underlying vector space as $M_q=\oplus_k \Hom(k, k\otimes q)$, and the following action 

\begin{align*}
\blacktriangleright : \torc\otimes_\C M_q &\to M_q,\\    
f\otimes \alpha&\mapsto f\blacktriangleright \alpha=\sum_{n}\sqrt{\frac{d_n}{d_k d_a}}\solidtri
\end{align*}

\begin{theorem}
    $ S_q$ is a $\torc$-module map from $M_q^\circlearrowleft$ to $M_q$, i.e., for any $f\in \torc$ and $\alpha \in M_q$,
    \begin{equation}
         S_q (f\blacktriangleright \alpha)=f\vartriangleright ( S_q \alpha).
    \end{equation}
\end{theorem}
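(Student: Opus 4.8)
The statement to prove is that $S_q\colon M_q^{\circlearrowleft}\to M_q$ intertwines the two $\torc$-actions: $S_q(f\blacktriangleright\alpha)=f\vartriangleright(S_q\alpha)$ for all $f\in\torc$, $\alpha\in M_q$. The natural approach is a direct graphical computation, exactly in the style of the associativity proofs for $\star$ and $\vartriangleright$ already given in the excerpt. First I would expand both sides into string diagrams inside the (square presentation of the) torus: on the left, stack $f$ (rotated by $90^\circ$, per the definition of $\blacktriangleright$) onto $\alpha$ and then apply the $S$-type move from the definition of $S_q$; on the right, first apply $S_q$ to $\alpha$ to produce the diagram with the extra $m$-loop weighted by $d_m/D$, and then let $f$ act via $\vartriangleright$ in the usual (unrotated) way. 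The goal is to massage one picture into the other using only the moves in Lemma~\ref{FRmove} — F-moves ($\eqref{Fmove}$), R-moves ($\eqref{Rmove}$), and the loop-sliding identity $\circline = \tfrac{S_{ax}}{S_{1x}}\,\linexid$ — together with naturality of braiding.

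The key geometric idea is that the map $S_q$ is implemented by a loop (the $m$-summed circle with weight $d_m/D$, i.e.\ essentially the "$S$-operator" loop familiar from Verlinde-type manipulations) that wraps one cycle of the torus, while the rotated action $\blacktriangleright$ of $f$ corresponds to $f$ running along the cycle the loop does \emph{not} link, and $\vartriangleright$ to $f$ running along the one it \emph{does} link. So the content of the theorem is: sliding $f$ past the $S_q$-loop converts "$f$ along one cycle" into "$f$ along the other cycle," which is exactly what a $90^\circ$ rotation should do. Concretely I expect the chain of equalities to run: (i) use naturality of braiding to pull the strands of $f$ through the $S_q$-loop; (ii) apply an F-move (and its inverse on the matching patch) to re-associate the fusion tree so that $f$ is fused directly onto $\alpha$ in the orientation demanded by the target side; (iii) apply an R-move to swap a braiding/antibraiding pair created in step (i); (iv) collapse a bubble or resolve a loop using $\eqref{normalise}$ and the loop identity, which should reproduce the correct $d_m/D$ and $\sqrt{d_n/(d_kd_a)}$ normalization factors. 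I would present this as a two- or three-line display of graphical equalities with $\triangle$ (F-move), $\blacktriangle$ (R-move), and a naturality step annotated over the equal signs, mirroring the earlier proofs.

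\begin{proof}[Proof sketch]
Both $S_q(f\blacktriangleright\alpha)$ and $f\vartriangleright(S_q\alpha)$ are computed as string diagrams on the square, with the $S_q$-loop contributing a factor $d_m/D$ summed over $m\in\Irr(\cC)$ and the actions contributing the normalization square roots from their definitions. Expanding the left-hand side, the rotated strands of $f$ sit across the cycle not linked by the $S_q$-loop; using naturality of the braiding one slides these strands through the loop, creating a braiding–antibraiding pair. An F-move together with its inverse on the matching patch (Lemma~\ref{FRmove}, Eq.~\eqref{Fmove}) re-associates the resulting fusion tree so that $f$ is fused onto $\alpha$ along the cycle linked by the loop, i.e.\ in exactly the pattern defining $\vartriangleright$. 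An R-move (Lemma~\ref{FRmove}, Eq.~\eqref{Rmove}) removes the spurious braiding–antibraiding pair, and a bubble collapse via the normalization~\eqref{normalise} reproduces the correct coefficient $\sqrt{d_n/(d_kd_a)}$; the $m$-loop is untouched and still carries weight $d_m/D$. This yields precisely $f\vartriangleright(S_q\alpha)$, proving the identity.
\end{proof}

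**Main obstacle.** The conceptual content is just "slide $f$ past the $S_q$-loop," but the real difficulty is bookkeeping: getting the summation indices ($m$, $n$, the internal fusion channels) and the product of normalization factors ($d_m/D$, the various $\sqrt{d_\bullet/d_\bullet}$) to match \emph{exactly} on the nose after the F-move and R-move, since a stray $d_k$ or a missing square root would break the equality. A secondary subtlety is orientation/direction of arrows on the strands once the braidings are introduced — one must be careful that the R-move chosen (from among the variants $R^{ij}_k$ vs.\ its inverse) is the correct one for the crossing produced by naturality, otherwise a phase $\theta$ appears. I would handle both by drawing the intermediate diagram explicitly and checking the coefficient ledger term by term, exactly as in the $S_q^{-1}S_q=\id$ computation already in the paper, where the unitarity of $S$ enters to kill the extra sum.
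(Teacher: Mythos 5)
Your proposal is correct and takes essentially the same route as the paper: both sides are expanded as string diagrams on the square and shown equal by sliding $f$ past the $S_q$-loop, with the moves of Lemma~\ref{FRmove} and the normalization \eqref{normalise} accounting for the coefficients. The paper's actual chain of equalities uses a bubble merge, an isotopy, and the strand-splitting identity (Lemma~\ref{FRmove}, part 4) rather than an explicit F-move/R-move pair, but that is only a difference in which elementary moves carry the bookkeeping.
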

\begin{proof}
    \begin{align}
         &S_q (f\blacktriangleright \alpha) 
         = \sum_{m,n} \frac{d_m}{D}\sqrt{\frac{d_n}{d_k d_b}}\sqactiona 
         = \sum_m \frac{d_m}{D} \sqrt{\frac{d_n}{d_k d_b}}\sqrt{\frac{d_k d_b}{d_n}}\sqactionb \nonumber \\ 
         &= \sum_{m,s} \frac{d_m}{D}\sqrt{\frac{d_s}{d_m d_a}}\sqactionc 
         = \sum_{m,s} \frac{d_m}{D}\sqrt{\frac{d_s}{d_m d_a}}\sqactiond
         = f\vartriangleright ( S_q \alpha).
    \end{align}
\end{proof}

Invoking Theorem~\ref{thm.sur} we can construct an element $\invrho S_q\in\torc$
\begin{equation}
   \invrho S_q = \sum_{k,\textcolor{magenta}{\alpha},m,a,n} \frac{d_a}{D^3}\sqrt{\frac{d_n d_m d_q}{d_k}}\sqelement 
\end{equation}
such that $\rho_s(\invrho S_q)=\delta_{sq}  S_q$, and further an element $S=\sum_q \invrho S_q=\rho^{-1}(\oplus_q S_q)\in\torc$. More precisely,
\begin{equation}
     S=\sum_{a,k,m,n}\frac{d_a}{D^3}\sqrt{d_n d_m d_k}\sqelementb
\end{equation}
such that for any $\alpha\in \oplus_q M_q$, one has $S\vartriangleright \alpha=\oplus_q  S_q(\alpha)$. $S$ is the preimage of $\oplus_q  S_q$ under the isomorphism $\rho:\torc\cong \oplus_q \Endo(M_q)$. 

Since $ S_q$ are invertible, we conclude that $S$ is also invertible in $\torc$ with $S^{-1}=\sum_q \invrho S_q^{-1} =\rho^{-1}(\oplus_q  S_q^{-1})$. The $90^\circ$ rotated action $\blacktriangleright$ of $M_q^\circlearrowleft$ is in fact induced by conjugation by $S$:
\begin{align*}
    f\blacktriangleright \alpha=\oplus_q  S_q^{-1}(f\vartriangleright(\oplus_q  S_q \alpha)=S^{-1}\vartriangleright(f\vartriangleright(S\vartriangleright\alpha))=(S^{-1}\star f\star S)\vartriangleright \alpha.
\end{align*}
Therefore, we can say that $S\in\torc$ realizes the S-transformation on the punctured torus.

The linear map of charge conjugation operator $C_q \in \Endo(M_q)$ for each simple $q$ is defined as
\begin{equation}
    C_q \modalpha = \theta_q\Coper.
\end{equation}

Whereas the T-transformation on the punctured torus is defined via
\begin{equation}
    T_q \modalpha = \Toper.
\end{equation}

The corresponding elements of $T, C\in \torc$ can be found similarly as the construction of $S \in \torc$ above. We now have a complete set of operators $S$, $T$, and $C$.
One can easily show that $S_q^2 = C_q$, $C_q^2 = \theta_q^3 I$, and $(S_qT_q)^3 = \Theta\theta_q^{-2} C_q$. 

\begin{remark}
    In \cite{Kitaev_2006}, Kitaev also introduced punctured $T,S$ transformations. Denoting Kitaev's $T,S$ by $T_q'$ and $S_q'$, they are related to our definition by $T_q'S_q'=T_qS_q$. Kitaev's convention for punctured $T,S$ transformations is also adopted in \cite{wen2019distinguish,Lan_2020}.
\end{remark}

\section{Examples}\label{sec8}
In this section, we demonstrate our systematic approach to identify all the central idempotents. We will decompose the torus algebras in two of the most commonly known anyon models, the chiral Ising anyon model and the Fibonacci anyon model, whose explicit data could be found in e.g.,~\cite{Bonderson_2008}.
We will use the following basis vectors of torus algebras
\begin{equation}\label{basis}
    v\left(a,b,c\right) := \boxbasis\quad.
\end{equation} 
Since the two examples are all multiplicity-free where $N_{ij}^k\leq 1$, there are no extra degrees of freedom on the vertices, and thus no vertex labels.
\subsection{The Chiral Ising Anyon Model}
In \cite{Ma} Ma et al.~decomposed the torus algebra of the chiral Ising anyon model by taking a ``quotient'' of $\torc$. First, they listed all the basis vectors in terms of $v\left(a,b,c\right)$ in $\torc$. Second, by noticing the dimension of $\torc$ is $10$, they recognized that $\torc$ consists of one block with dimension $3$ denoted as $\mathrm{Mat}_3$ and another with dimension $1$ denoted as $\mathrm{Mat}_1$. Lastly, they find the projector of $\mathrm{Mat}_3$ such that $P\torc=\mathrm{Mat}_3$. They used the intuition that the centrifugal loop of a non-abelian anyon (which is a non-trivial element in $\torc$) should be set to its quantum dimension, and by taking a quotient forcing such conditions, they obtain the projector corresponding to the ground state subspace on the torus. Based on our general discussion on $\torc$, indeed by setting $\lwasylozenge_a=d_a=DS_{a\one}$ one has
  \begin{equation}
     P_q^{\wasylozenge} = \sum_a S_{q\mathbf{1}}\overline{S_{aq}}d_a=\delta_{q\one}.
 \end{equation}
 Thus, this method is good enough to obtain the subspace $P_\one^{\wasylozenge}\torc$ where $P_\one^{\wasylozenge}=1$ and other $P_q^{\wasylozenge}=0$. However, it fails to produce the entire decomposition of $\torc$ unless there are only two blocks. 

In our approach, we first obtain centrifugal loops of all anyon charges, then we can calculate all the idempotents using equation (\ref{idempotent}) to decompose $\torc$ completely.

According to \cite{Ma}, we have the following centrifugal loops of anyon charges $\{\one, \sigma, \psi\}$,
\begin{equation}
    \lwasylozenge_\psi = \lwasylozenge_\one = 1 ,\quad \frac{1}{\sqrt{2}}\lwasylozenge_\sigma = : r,
\end{equation}
where $r = \frac{1}{2}\left(1+\psi_x+\psi_y -\psi_x\psi_y\right)$ with $\psi_x = v\left(\psi,\one,\psi\right)$ and $\psi_y = v\left(\one,\psi,\psi\right)$.

With equation (\ref{idempotent}), the idempotents are
\begin{align}
    P_\one^{\wasylozenge} &= \sum_i \frac{1}{D^2}d_i \lwasylozenge_i= \frac{1}{4}\left(\lwasylozenge_\one + \lwasylozenge_\psi + \sqrt{2} \lwasylozenge_\sigma\right) = \frac{1}{2}\left(1+r\right),\\
    P_\psi^{\wasylozenge} &= \frac{1}{4}\left(\lwasylozenge_\one + \lwasylozenge_\psi - \sqrt{2} \lwasylozenge_\sigma\right)= \frac{1}{2}\left(1-r\right),\\
    P_\sigma^{\wasylozenge} &= 0.
\end{align}
We recovered the projector $P_\one^{\wasylozenge}$ from \cite{Ma} in a systematical way, which corresponds to the ground state subspace. It is easy to see restricting to the subspace of $\torc$ with $P_\one^{\wasylozenge}=1$ and $P_\psi^{\wasylozenge}=0$ is equivalent to setting setting $r=1$, i.e., forcing the centrifugal loop $\lwasylozenge_\sigma $ to be its quantum dimension $d_\sigma=\sqrt2$.

\subsection{Fibonacci anyon model}
The Fibonacci model has anyon charges $\{\one, \varepsilon\}$. First, we evaluate the centrifugal loops of all anyon charges in this model, and it is obvious that $\lwasylozenge_\one = 1$. The quantum dimension of $\varepsilon$ is $d_\varepsilon = \frac{1+\sqrt{5}}{2}=: \phi$. The nontrivial entries of the $F$-symbol are \begin{equation*}
    \left(F^{\varepsilon\varepsilon\varepsilon}_\varepsilon\right)_{e,f} = \begin{pmatrix}
        \phi^{-1} & \phi^{-1/2}\\
        \phi^{-1/2} & -\phi^{-1}
    \end{pmatrix}_{e,f}.
\end{equation*}

The evaluation of centrifugal loop $\lwasylozenge_\varepsilon$ is
\begin{align}
    \squareloope &=  \sum_{k,y}\frac{\sqrt{d_k d_y}}{d_\varepsilon^2}\loope = \sum_{k,y,r, s}\frac{\sqrt{d_k d_y}}{d_\varepsilon^2} F^{\varepsilon \varepsilon \bar{y}}_{k, \varepsilon, r} F^{\bar{y} \varepsilon \varepsilon}_{k,\varepsilon, s} \centritrans \nonumber\\
    &= \sum_{r,k,y} \sqrt{\frac{d_k d_y}{d_{\varepsilon}^2 d_r}}F^{\varepsilon \varepsilon \bar{y}}_{k, \varepsilon, r} F^{\bar{y} \varepsilon \varepsilon}_{k,\varepsilon, s} \centritransb = \sum_{r,k,y} \sqrt{\frac{d_k d_y}{d_{\varepsilon}^2 d_r}}F^{\varepsilon \varepsilon \bar{y}}_{k, \varepsilon, r} F^{\bar{y} \varepsilon \varepsilon}_{k,\varepsilon, s} v\left(y, k, r\right)\nonumber \\
    &= \frac{1}{\phi}v\left(\one,\one,\one\right) + \frac{1}{\phi}v\left(\varepsilon,\varepsilon,\one\right) + \frac{1}{\phi}v\left(\varepsilon,\one,\varepsilon\right) +\frac{1}{\phi}v\left(\one,\varepsilon,\varepsilon\right) + \frac{1}{\phi^{5/2}}v\left(\varepsilon,\varepsilon,\varepsilon\right).
\end{align}
Similar to the previous example, we define $$\frac{1}{\phi}\lwasylozenge_\varepsilon =: s.$$

The idempotents are
\begin{align}
    P_\one^{\wasylozenge} &= \sum_{i}\frac{d_i}{D^2}\lwasylozenge_i = \frac{1}{D^2}\left(1+\phi^2\cdot s\right), \\
    P_\varepsilon^{\wasylozenge} &= \frac{d_\varepsilon}{D} \sum_i \overline{S}_{i\varepsilon}\lwasylozenge_i =\frac{\phi}{D}\left(\frac{\phi}{D} -\frac{\phi\cdot s}{D}\right)= \frac{\phi^2}{D^2}\left(1-s\right).
\end{align}
Again, $P_\one^{\wasylozenge}$ corresponds to the ground state subspace, and the subspace of $\torc$ with $P_\one^{\wasylozenge}=1$ and $P_\varepsilon^{\wasylozenge}=0$ is obtained by setting setting $s=1$, i.e., forcing the centrifugal loop $\lwasylozenge_\varepsilon $ to be its quantum dimension $d_\varepsilon=\phi$.


\appendix

\acknowledgments

We would like to thank Holiverse Yang for the discussions. TL is supported by start-up funding from The Chinese University of Hong Kong, and by funding from Research Grants Council, University Grants Committee of Hong Kong (ECS No. 24304722).



\bibliographystyle{JHEP}
\bibliography{biblio.bib}






\end{document}